\theoremstyle{plain}
\newtheorem{theorem}{Theorem}[section]
\newtheorem{proposition}[theorem]{Proposition}
\newtheorem{lemma}[theorem]{Lemma}
\newtheorem{corollary}[theorem]{Corollary}
\theoremstyle{definition}
\theoremstyle{remark}
\icmltitlerunning{Learning High-Order Relationships of Brain Regions}
\newcommand\shline{\addlinespace[3pt]\hline\addlinespace[3pt]}
\newcommand{\mname}{\textsc{HyBRiD}}
\newcommand{\multiheadname}{\textit{multi-head drop-bottleneck}}
\newcommand{\avgimp}{$12.1\%$}
\begin{document}

\twocolumn[
\icmltitle{Learning High-Order Relationships of Brain Regions}



\icmlsetsymbol{equal}{*}

\begin{icmlauthorlist}
\icmlauthor{Weikang Qiu}{yale}
\icmlauthor{Huangrui Chu}{yale}
\icmlauthor{Selena Wang}{yale}
\icmlauthor{Haolan Zuo}{yale}
\icmlauthor{Xiaoxiao Li}{ubc,vector}
\icmlauthor{Yize Zhao}{yale}
\icmlauthor{Rex Ying}{yale}
\end{icmlauthorlist}

\icmlaffiliation{ubc}{University of British Columbia, Vancouver, Canada}
\icmlaffiliation{vector}{Vector Institute, Toronto, Canada}
\icmlaffiliation{yale}{Yale University, New Haven, USA}

\icmlcorrespondingauthor{Rex Ying}{rex.ying@yale.edu}

\icmlkeywords{Machine Learning, ICML}

\vskip 0.3in
]



\printAffiliationsAndNotice{}  

\begin{abstract}
Discovering reliable and informative relationships among brain regions from functional magnetic resonance imaging (fMRI) signals is essential in phenotypic predictions. Most of the current methods fail to accurately characterize those interactions because they only focus on pairwise connections and overlook the high-order relationships of brain regions. We propose that these high-order relationships should be \textit{maximally informative and minimally redundant} (MIMR).
However, identifying such high-order relationships is challenging and under-explored due to the exponential search space and the absence of a tractable objective. 
In response to this gap, we propose a novel method named \mname{} which aims to extract MIMR high-order relationships from fMRI data. \mname{} employs a \textsc{Constructor} to identify hyperedge structures, and a \textsc{Weighter} to compute a weight for each hyperedge, which avoids searching in exponential space. \mname{} achieves the MIMR objective through an innovative information bottleneck framework named \multiheadname{} with theoretical guarantees. Our comprehensive experiments demonstrate the effectiveness of our model. Our model outperforms the state-of-the-art predictive model by an average of $11.2\%$, regarding the quality of hyperedges measured by CPM, a standard protocol for studying brain connections. Source code is available at \url{https://github.com/Graph-and-Geometric-Learning/HyBRiD}.



\end{abstract}

\section{Introduction}
\label{sec:intro}

Discovering relations among brain regions toward a specific phenotypic outcome from fMRI signals has been a crucial area in neuroimaging research. Reliable and informative relations help neuroscientists and clinical professionals to better understand brain functions, and thus improve clinical diagnosis and treatments \citep{kucian2008development,kucian2006impaired, li2015putting,li2015toward,satterthwaite2015linked,wang2016efficient}.
However, despite the clear multiplexity of the brain's involvement in cognition \citep{logue2014neural,barrasso2016neurobiological,knauff2010complex,reineberg2022context}, current imaging biomarker detection methods \citep{shen2017using,gao2019combining,li2021braingnn} focus only on the contributing roles of the pairwise connectivity edges. In contrast, most brain functions involve distributed patterns of interactions among multiple regions \citep{semedo2019cortical}. For instance, executive planning requires the appropriate communication of signals across many distinct cortical areas \citep{logue2014neural}. 
These high-order relationships, cannot always be decomposed into pairwise ones \cite{battiston2020networks, battiston2021physics, bick2023higher}, thus allowing them to capture information beyond the reach of pairwise ones \cite{do2020structural}. Consequently, relying solely on pairwise connectivity, without accounting for the brain's complex high-order structure, may result in inconsistent findings and low predictive performance across studies. 
Although recently there have been a few works \citep{zu2016identifying,xiao2019multi,li2022construction} working on discovering the high-order relationships of brain regions, they are unable to effectively extract meaningful patterns. This is because these methods usually first identify some candidates of high-order relationships and then perform feature selection. The candidates are obtained by enumerating only a small portion of all possible relations, or by clustering methods that are unrelated to the target. As a result, most informative high-order relations are likely excluded at the first stage. This inspires us to solve the problem in an end-to-end manner through a more expressive model and a more appropriate objective.


In this paper, we aim to identify high-order relationships that are informative towards a phenotypic outcome, such as a cognition score. However, unlike pairwise relations, the number of possible high-order relations is exponential. To identify the most informative ones from the exponential space, we propose our objective: \textit{maximally informative and minimally redundant} (MIMR). That is to say, we maximize the information contained in the high-order relationships towards a neurological outcome (\textit{informativeness}) while diminishing the participation of unrelated brain regions (\textit{redundant}).
Such a criterion, on the one hand, ensures the predictive performance of these high-order relationships; on the other hand, it endows the model with the capacity to identify more succinct and interpretable structures \citep{yu2020graph,miao2022interpretable,miao2022interpretable2,chen2023tempme}. A formal definition of the MIMR criterion could be found in Equation \ref{equ:ib2} from an information bottleneck point of view.

\begin{figure}
    \includegraphics[width=\linewidth]{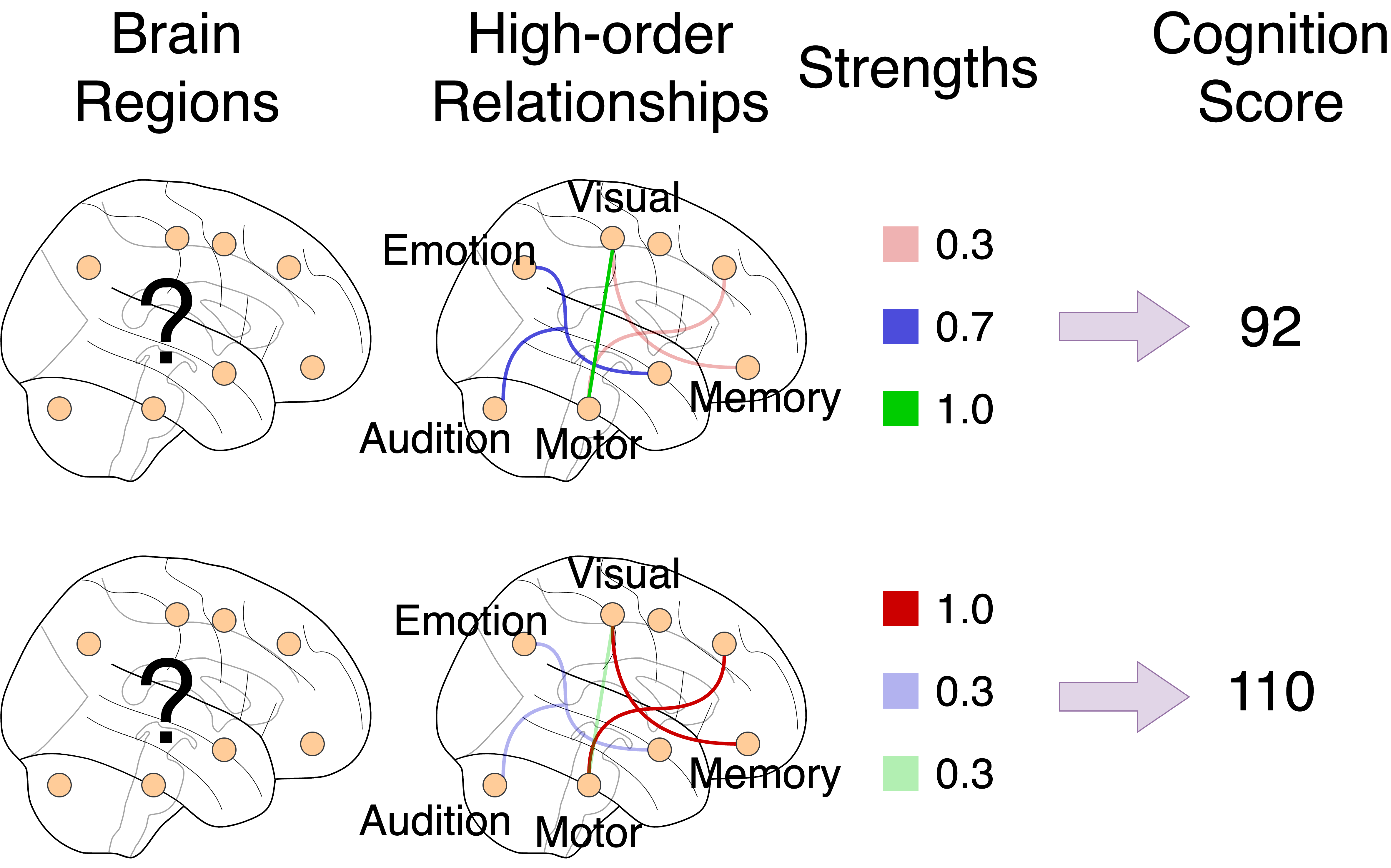}
    \caption{We identify high-order relationships of brain regions, where hyperedge structures and weights possess strong relevance to cognition (maximal informativeness). Meanwhile, they contain the least irrelevant information.}
    \label{fig:tasting}
    
\end{figure}




We formulate high-order relationships as weighted hyperedges in a hypergraph, where regions are treated as nodes. Unlike a traditional graph where edges connect only two nodes, a hypergraph allows edges, known as hyperedges, to connect any number of nodes. The hypergraph should be weighted, and the weights of hyperedges are considered as strengths of high-order relationships, which contain the information relevant to the outcome (Figure \ref{fig:tasting}). 


However, current methods for hypergraph construction, which are mostly based on neighbor distances and neighbor reconstruction \citep{wang2015visual,liu2017elastic,jin2019robust,huang2009video}, are unsuitable in our context for several reasons: 1) they are unable to learn MIMR hyperedges due to the absence of a tractable objective for learning such hyperedges.  2) they fall short of learning consistent structures across subjects, which contradicts the belief that the cognitive function mechanism of healthy human beings should be similar \citep{wang2023inference}. 3) the number of hyperedges is restricted to the number of nodes, which may lead to sub-optimal performance. Furthermore, although information bottleneck (IB) has been a prevailing solution to learn MIMR representations in deep learning \citep{kim2021drop,alemi2016deep,luo2019significance}, existing IB methods focus on extracting compressed representations of inputs instead of identifying underlying structures such as hypergraphs. Harnessing the IB framework for identifying hypergraphs necessitates both architectural innovations and theoretical derivations.

\paragraph{Proposed Work} In this paper, we propose \textbf{Hy}pergraph of \textbf{B}rain \textbf{R}egions via mult\textbf{i}-head \textbf{D}rop-bottleneck (\mname{}), a novel approach for identifying maximally informative yet minimally redundant high-order relationships of brain regions. The overall pipeline of \mname{} is depicted in Figure \ref{fig:pipeline}. \mname{} is equipped with a \textsc{Constructor} and a \textsc{Weighter}. The \textsc{Constructor} identifies the hyperedge structures of brain regions by learning sets of masks, and the \textsc{Weighter} computes a weight for each hyperedge. To advance the IB principle for hyperedge identification, we further propose \textit{multi-head drop-bottleneck} and derive its optimization objective.

\mname{} avoids searching in an exponential space through learning masks to identify hyperedges, which guarantees efficiency. Its feature-agnostic masking mechanism design ensures \mname{} to learn consistent structures across subjects. Moreover, the model is equipped with a number of parallel heads, each of which is dedicated to a hyperedge. Through this, \mname{} is able to identify any number of hyperedges, depending on how many heads it is equipped with. Additionally, the proposed \multiheadname{} theoretically guarantees the maximal informativeness and minimal redundancy of the identified hyperedges. 


We evaluate our methods on the open-source ABIDE dataset and the restricted ABCD dataset. We quantitatively evaluate our approach by a commonly used protocol for studying brain connections, CPM~\citep{shen2017using} (Appendix \ref{sec:app_cpm}), and show that our model outperforms the state-of-the-art deep learning models by an average of $11.2\%$ on a comprehensive benchmark. Our post-hoc analysis demonstrates that hyperedges of higher degrees are considered more significant, which indicates the significance of high-order relationships in human brains.
\section{Problem Definition \& Notations}
\label{sec:prelims}
\paragraph{Input} 
 Our dataset is a collection of human subject's features and their phenotypic outcomes, which is represented by the pair $(X, Y)$ for each subject. $X \in \mathbb{R}^{N \times d}$ represents the features for each subject, where $N$ is the number of brain regions and $d$ is the feature size. Consistent with previous works \citep{kan2022brain,li2021braingnn}, the features are Pearson correlations derived from fMRI time series. $Y \in \mathbb{R}$ denotes the phenotypic outcome, such as the intelligent quotient. Section \ref{sec:experiments} and Appendix \ref{app:dataset} will elaborate more details about datasets and preprocessing procedures. 
 

\paragraph{Goal}
Based on the input $X$, \mname{} aims to learn a weighted hypergraph of the brain, where regions are nodes. To achieve this, \mname{} identifies a collection of hyperedges $H = (\boldsymbol{h}^1, \boldsymbol{h}^2, \cdots, \boldsymbol{h}^K)$, and assigns weights $\boldsymbol{w} = [w^1, w^2, \cdots, w^K]^T$ for all hyperedges. These hyperedges and their weights, which represent strengths of hyperedges, are expected to be the most informative towards $Y$ yet the least redundant. 

\paragraph{Representation of Hyperedges}
As mentioned before, we use $H$ to denote the collection of hyperedge structures and $\boldsymbol{h}^k$ to denote the $k$-th hyperedge. To associate hyperedges with node memberships, we use the following representation for a hyperedge:
\begin{equation}
    \boldsymbol{h}^k = \boldsymbol{m}^k \odot X \in \mathbb{R}^{N\times d},
\end{equation}
where $\boldsymbol{m}^k \in \{0,1\}^N$ is a mask vector and $\odot$ denotes broadcasting element-wise multiplication. In other words, each $\boldsymbol{h}^k$ is a randomly row-zeroed version of $X$.

\section{Related Work}
\paragraph{Hypergraph Construction}
 Existing hypergraph construction methods are mostly based on neighbor reconstruction and neighbor distances. For example, the $k$ nearest neighbor-based method \citep{huang2009video} connects a centroid node and its $k$ nearest neighbors in the feature space to form a hyperedge. \citet{wang2015visual,liu2017elastic,jin2019robust,xiao2019multi} further refine these neighbor connections through various regularization. However, the number of hyperedges of these methods is restricted to the number of nodes, and hyperedges obtained by these methods are inconsistent across instances. 
 Cluster-based approaches learn community structure in a graph \cite{bannadabhavi2023community,ying2018hierarchical}, which can also be used to form hyperedges.
 \citet{zhang2022deep,zhang2018dynamic} proposed to iteratively refine a noisy hypergraph, which is obtained by the aforementioned methods. Therefore, they share the same limitations as the aforementioned methods. In addition, these methods are unable to learn MIMR hyperedges due to the absence of a tractable objective. Other methods, such as attributed-based methods~\citep{huang2015learning, joslyn2019high}, are ill-adapted to our context since they require discrete labels. 
 Different from these methods, we provide a way to learn a consistent hypergraph through a deep-learning model. Furthermore, thanks to the proposed \multiheadname, these hyperedges are theoretically ensured MIMR.

\paragraph{High-Order Relationships in fMRI}

Although there are some methods working on high-order relationships in fMRI, they have limitations and are inconsistent with our MIMR objective.
\citet{xiao2019multi,li2022construction} used the existing non-learning-based hypergraph construction methods, which may lead to noisy and inexpressive hypergraphs. \citet{zu2016identifying,santoro2023higher} enumerated all hyperedges with degrees lower than 3, which can only discover a tiny portion of all possible hyperedges in exponential space and is not scalable to a large degree. \citet{rosas2019quantifying} proposed O-information, which reflects the balance between redundancy and synergy. The O-information metric is utilized by \citet{varley2023multivariate} to study fMRI data. However, the objective of these methods is not consistent with ours: although both of us are quantifying the redundancy of high-order relations, our method is to learn those that are most informative toward a cognition score, while theirs is to depict the synergy and redundancy within a system.

\paragraph{Information Bottleneck}
Information bottleneck (IB) \citep{tishby2000information} is a technique in data compression. The key idea is to extract a summary of data, which contains the most relevant information to the objective. \citet{alemi2016deep} first employed an IB view of deep learning. After that, IB has been widely used in deep learning. The applications span areas such as computer vision \citep{luo2019significance,peng2018variational}, reinforcement learning \citep{goyal2019infobot,igl2019generalization}, natural language processing \citep{wang2020learning} and graph learning \citep{yu2020graph,yu2022improving, xu2021infogcl,wu2020graph}. Unlike these studies that use IB to extract a compressed representation or a select set of features, our approach focuses on identifying the underlying structures of the data.

\paragraph{Connectivity-based Phenotypic Prediction}
Recently, deep learning techniques have been increasingly employed in predicting phenotypic outcomes based on the connectivity of brain regions. Most works \citep{ahmedt2021graph,li2019graph,cui2022interpretable,kan2022fbnetgen,cui2022braingb,said2023neurograph} model the brain network as a graph, in which regions act as nodes and pairwise correlations form the edges. These methods predominantly utilize Graph Neural Networks (GNNs) to capture the connectivity information for predictions. In addition to GNNs, \citet{kan2022brain} proposed to use transformers with a specially designed readout module, leveraging multi-head attention mechanisms to capture pairwise connectivity. However, all of these methods heavily rely on pairwise connectivity and neglect more intricate higher-order relationships. This oversight, on the one hand, leads to sub-optimal prediction performances and, on the other hand, prevents domain experts from acquiring insightful neuroscience interpretations, given that brain functions often involves multiple regions.
\section{Method}
\label{sec:approach}
\paragraph{Method Overview}

\mname{} consists of a \textsc{Constructor} $\mathcal{F}_c$, a \textsc{Weighter} $\mathcal{F}_w$, and a \textsc{LinearHead} $\mathcal{F}_l$. At a high level, the \textsc{Constructor} $\mathcal{F}_c$ is responsible for identifying hyperedges $H$ from the data to construct the hypergraph. After that, the \textsc{Weighter} $\mathcal{F}_w$ calculates a weight for each hyperedge. Finally, based on all the weights $\boldsymbol{w}$, the \textsc{LinearHead} $\mathcal{F}_l$ predicts the label $Y$.
An illustration of this pipeline is shown in Figure \ref{fig:pipeline}. The pipeline can be formulated as
\begin{equation}
    \label{equ:pipeline}
    X \xrightarrow[\mathcal{F}_c]{} H \xrightarrow[\mathcal{F}_w]{} \boldsymbol{w}\xrightarrow[\mathcal{F}_l]{} Y.
\end{equation}

We will elaborate on the details of the architecture below.

\begin{figure*}
\vspace{0.8em}
    \centering
    \includegraphics[width=\linewidth]{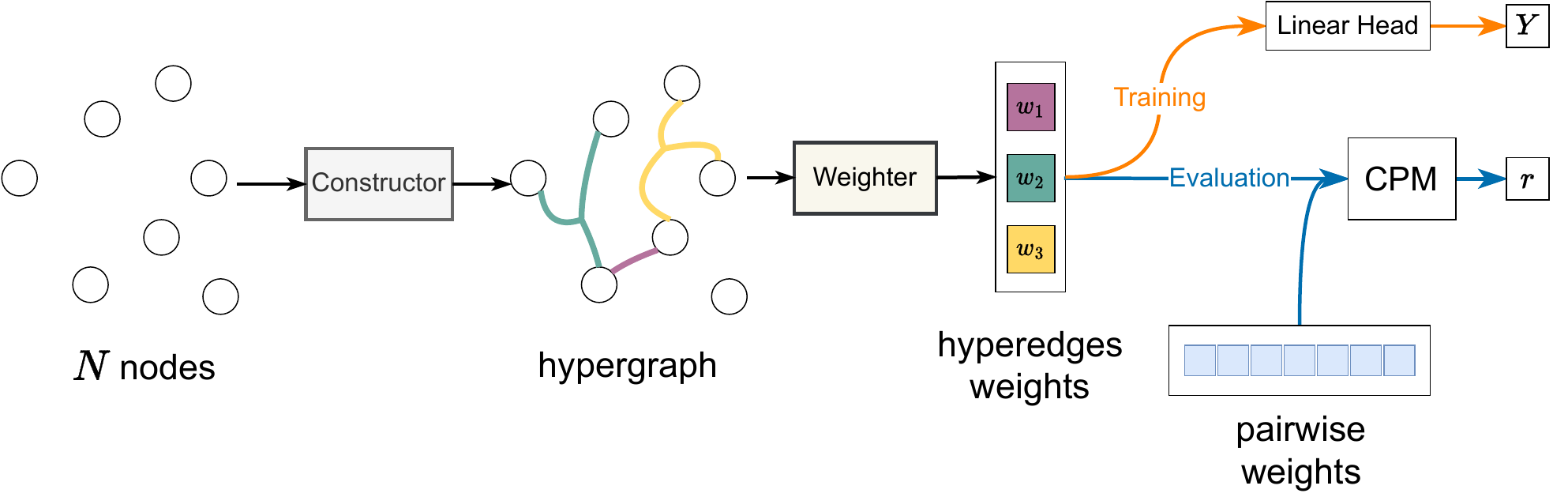}
    \caption{Overview of the \mname{} pipeline when the total number of hyperedges $K=3$. Hyperedge are in distinct colors for clarity.  The \textsc{Constructor} identifies hyperedges in the hypergraph, where regions are nodes. The \textsc{Weighter} computes a weight for each hyperedge. These weights, representing strengths of hyperedges, are expected to be informative in terms of our target $Y$. There are two separate phases after obtaining weights of hyperedges: 1) Training. The model's parameters are trained under the supervision of $Y$; 2) Evaluation. The output weights, as well as pairwise weights, are fed into the CPM (see Appendix \ref{sec:app_cpm}).}
    \label{fig:pipeline}
\end{figure*}

\subsection{Learning the Hypergraph by Multi-head Masking}
\label{sec:learn-by-mask}

Each instance (human subject) is represented by $X = [X_1, X_2, \cdots, X_N]^T \in \mathbb{R}^{N\times d}$, where $X_i \in \mathbb{R^d}$ is a column vector representing the features of region $i$. These regions are nodes in the hypergraph we are going to construct. Hyperedges in the hypergraph can be beneficial in the learning process below because it is essential to model the relationships between more than two regions.

\paragraph{Hyperedges Construction}
In this paragraph, we elaborate on how the \textsc{Constructor} identifies the hyperedges, i.e. $H=\mathcal{F}_c(X)$.

Suppose the number of hyperedges is $K$, which is a predefined hyperparameter. We assign a head to each hyperedge. Each head is responsible for constructing a hyperedge by selecting nodes belonging to that hyperedge. 

Specifically, to construct the $k$-th hyperedge, the \textsc{Constructor}'s $k$-th head outputs a column vector $\boldsymbol{m}^k \in \{0,1\}^N$, where each element in the vector corresponds to a brain region,

\begin{equation}
\label{equ:shallow}
\boldsymbol{m}^k = [\mathbbm{1}(p^k_{\theta,1}), \mathbbm{1}(p^k_{\theta,2}), \cdots, \mathbbm{1}(p^k_{\theta,N})]^T \in \{0, 1\}^N,
\end{equation}
where $p^k_{\theta, i} \in [0,1], i=1,2,\cdots, N$ are learnable probabilities. $\mathbbm{1}: [0,1] \mapsto \{0,1\} $ is an indicator function, which is defined as $\mathbbm{1}(x) = 1$ if $x > 0.5$ and $\mathbbm{1}(x) = 0$ if $x \leq 0.5$.  And $\boldsymbol{m}^k$ is a column vector corresponding to the $k$-th hyperedge. Note that since there is no gradient defined for the indicator operation, we employ the stop-gradient technique \citep{oord2017neural,bengio2013estimating} to approximate it.

\begin{figure}[t]
    \centering
    \includegraphics[width=\linewidth]{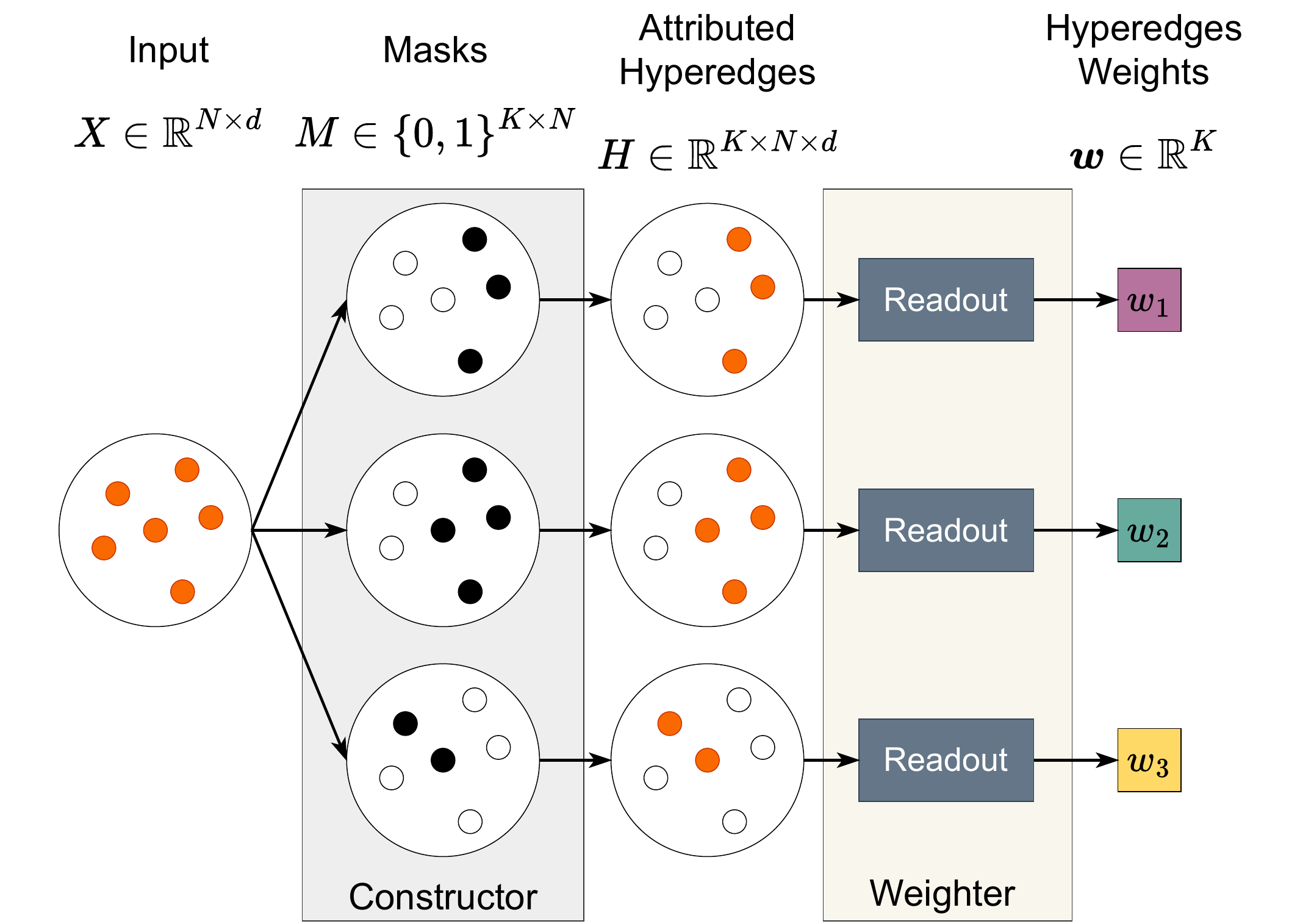}
    \caption{Architecture details of the \textsc{Constructor} and the \textsc{Weighter}  when the number of nodes $N=6$ and the number of hyperedges $K=3$. At a high level, the \textsc{Constructor} learns the hyperedge structure by masking nodes. The \textsc{Weighter} computes the weight of each hyperedge based on the hyperedge's member nodes and their features.}
    \label{fig:enter-label}
    \vspace{-2em}
\end{figure}

In the vector $\boldsymbol{m}^k$, $0$ indicates nodes that are masked out, and $1$ indicates nodes that are not masked. Nodes that are not masked are considered to form a hyperedge together. We use $\boldsymbol{h}^k$ to represent the masked version of $X$
\begin{equation}
\begin{aligned}
    \boldsymbol{h}^k &= \boldsymbol{m}^k \odot X \\&= [\boldsymbol{m}^k_1 X_1, \boldsymbol{m}^k_2 X_2, \cdots, \boldsymbol{m}^k_N X_N] \in \mathbb{R}^{N\times d},   
\end{aligned}
\end{equation}
where $\odot$ is the broadcast element-wise multiplication. $\boldsymbol{m}^k_j$ is the $j$-th element of the vector $\boldsymbol{m}^k$.  

We obtain $K$ hyperedges for $K$ sets of masks. We use $H$ to denote the collection of all hyperedges.
\begin{equation}
    H = (\boldsymbol{h}^1, \boldsymbol{h}^2, \cdots, \boldsymbol{h}^K).
\end{equation} 

\paragraph{Hyperedge Weighting}
After obtaining the structure (i.e. member nodes) of each hyperedge,  the \textsc{Weighter} will calculate each hyperedge's weight, which is supposed to indicate the importance of that hyperedge, based on the member nodes and their features, i.e. $\boldsymbol{w} = \mathcal{F}_w(H)$.

These weights are obtained by a $\operatorname{Readout}$ module, which is composed of: 1) summing over all the non-masked nodes feature-wisely; 2) dim reduction operation.

\begin{equation}
\label{equ:readout}
    w^k = \operatorname{Readout}(\boldsymbol{h}^k) = \operatorname{DimReduction}({\boldsymbol{m}^k}^T\boldsymbol{h}^k) \in \mathbb{R},
\end{equation}

where $w^k$ is the weight of the $k$-th hyperedge and $\operatorname{DimReduction}$ is an MLP with ReLU activations, where the output dimension is $1$.
For all hyperedges, we obtain $K$ hyperedges in total, $\boldsymbol{w} = [w^1, w^2, \cdots, w^K]^T \in \mathbb{R}^K$. Finally, these weights will be fed into the final linear head to predict the label of the instance, 
\begin{equation}
\label{equ:linearhead}
    \hat{Y} = \mathcal{F}_l(\boldsymbol{w}) \in \mathbb{R}.
\end{equation}

In contrast to previous hypergraph construction methods \citep{jin2019robust,xiao2019multi}, which identify hyperedges by refining neighbors and simply aggregating node features, \mname{} makes these procedures learnable and thus is able to identify MIMR hyperedges in a data-driven way through expressive neural networks. The number of hyperedges $K$ is decided according to the study in Appendix \ref{app:k_choice}.

\paragraph{Computational Complexity}
The computational complexity of our model is $O(N^2K)$, which is just the same scale as that of MLPs even though we are addressing a more challenging task: identifying high-order relationships in an exponential space. Details of the complexity calculation can be found in Appendix \ref{app:complexity}.

\subsection{Optimization Framework}
 Since there are no existing IB frameworks that can be applied in our context, we propose a new IB framework named \multiheadname{} to optimize \mname{}. To adopt an information bottleneck view of \mname{}, we consider $X$, $Y$ and $H$ are random variables in the Markovian chain $X \leftrightarrow Y \leftrightarrow H$. According to our MIMR objective, we optimize
\begin{equation}
\label{equ:ib2}
    \arg \max I(H; Y) - \beta I(H; X),
\end{equation}
where $I(\cdot ; \cdot)$ denotes the mutual information. $I(H; Y)$ corresponds to the informativeness and $I(H; X)$ corresponds to the redundancy. $\beta$ is a coefficient trading off informativeness and redundancy. Since optimizing the mutual information for high-dimensional continuous variables is intractable, we instead optimize the lower bound of Equation \ref{equ:ib2}. Specifically, for the first term (informativeness), it is easy to show
\begin{equation}
\label{equ:lowerbound}
\begin{aligned}
    I(H; Y) &= \mathbb{H}[Y] -  \mathbb{H}[{Y|H}] \\
    &=\mathbb{H}[Y] + \mathbb{E}_{p(Y, H)}[\log p(Y|H)] \\
    & = \mathbb{H}[Y] + \mathbb{E}_{p(Y, H)}[\log q_\phi(Y|H)] \\
    &\quad + \mathbb{E}_{p(H)}[\operatorname{KL}(p(Y|H)|q_\phi(Y|H))] \\
   & \geq \mathbb{H}[Y] + \mathbb{E}_{p(Y, H)}[\log q_\phi(Y|H)],
\end{aligned}
\end{equation}
where $\mathbb{H}[\cdot]$ is the entropy computation.
Since there is no learnable component in the entropy of $Y$, we only need to optimize the second term $\mathbb{E}_{p(Y, H)}[\log q_\phi(Y|H)]$. $q_\phi$ can be considered as a model that predicts $Y$ based on $H$, which essentially corresponds to $\mathcal{F}_l \circ \mathcal{F}_w$, where $\circ$ is the function composition. In practice, we set $q_\phi$ as a Gaussian model with variance $1$ as most probabilistic machine learning models do for continuous data modeling \citep{alemi2016deep,luo2019significance,peng2018variational,kingma2013auto}.

For the second term (redundancy) in Equation \ref{equ:ib2}, we have

\begin{proposition}
    (Upper bound of $I(H;X)$ in \multiheadname{})
\begin{equation}
\label{equ:multi_drop}
    \begin{aligned}
        I(H; X) 
        \leq \sum_{k=1}^K
        I(\boldsymbol{h}^k; X)
        &\leq \sum_{k=1}^K \sum_{i=1}^N
        I(\boldsymbol{h}^k_i; X_i)\\
        &= \sum_{k=1}^K\sum_{i=1}^N \mathbb{H}[X_i] (1 - p^k_{\theta, i}),
    \end{aligned}
\end{equation}
\end{proposition}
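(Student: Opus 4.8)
The statement is a chain of two inequalities closed by one equality, so my plan is to establish each link in isolation and then compose them. The essential preliminary is to fix the probabilistic model under which the masks are analyzed: although the \textsc{Constructor} emits deterministic masks through the indicator $\mathbbm{1}$ at inference time, for the information-theoretic analysis I would treat each mask entry $\boldsymbol{m}^k_i$ as an independent Bernoulli variable (independent across heads $k$, across nodes $i$, and of $X$), so that $\boldsymbol{h}^k_i = \boldsymbol{m}^k_i X_i$ is a stochastic ``drop'' of the feature $X_i$, in the spirit of the Drop-Bottleneck framework~\citep{kim2021drop}. I will adopt the convention that node $i$ is retained in head $k$ with probability $1-p^k_{\theta,i}$ (equivalently dropped with probability $p^k_{\theta,i}$), which is what makes the final factor $(1-p^k_{\theta,i})$ come out correctly.

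Both inequalities follow from a single observation, which I would isolate as a lemma: if $Z=(Z_1,\dots,Z_n)$ are conditionally independent given a variable $W$, then $I(Z;W)\le\sum_j I(Z_j;W)$. The proof is short: write $I(Z;W)=\mathbb{H}[Z]-\mathbb{H}[Z\mid W]$, bound $\mathbb{H}[Z]\le\sum_j\mathbb{H}[Z_j]$ by subadditivity of entropy, and use conditional independence to split $\mathbb{H}[Z\mid W]=\sum_j\mathbb{H}[Z_j\mid W]$, so that $I(Z;W)\le\sum_j(\mathbb{H}[Z_j]-\mathbb{H}[Z_j\mid W])=\sum_j I(Z_j;W)$. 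For the first inequality I apply this with $Z=(\boldsymbol{h}^1,\dots,\boldsymbol{h}^K)=H$ and $W=X$: since the head masks $\boldsymbol{m}^k$ are mutually independent and independent of $X$, the hyperedges $\boldsymbol{h}^k$ are conditionally independent given $X$, yielding $I(H;X)\le\sum_k I(\boldsymbol{h}^k;X)$. For the second inequality I apply the lemma head-by-head with $Z=(\boldsymbol{h}^k_1,\dots,\boldsymbol{h}^k_N)$ and $W=X$; here I additionally use that $\boldsymbol{h}^k_i$ depends only on $X_i$ and $\boldsymbol{m}^k_i$, hence $\mathbb{H}[\boldsymbol{h}^k_i\mid X]=\mathbb{H}[\boldsymbol{h}^k_i\mid X_i]$, which upgrades the bound to $I(\boldsymbol{h}^k;X)\le\sum_i I(\boldsymbol{h}^k_i;X_i)$, and then I sum over $k$.

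For the closing equality I would compute the scalar quantity $I(\boldsymbol{h}^k_i;X_i)$ directly. Writing it as $\mathbb{H}[X_i]-\mathbb{H}[X_i\mid\boldsymbol{h}^k_i]$ and conditioning on the mask outcome: on the retain event $\boldsymbol{h}^k_i=X_i$ (probability $1-p^k_{\theta,i}$) the observed value reveals $X_i$ exactly and contributes zero residual entropy, whereas on the drop event $\boldsymbol{h}^k_i=0$ (probability $p^k_{\theta,i}$) the mask is independent of $X_i$, so $X_i$ retains its full conditional entropy $\mathbb{H}[X_i]$. Averaging over the Bernoulli outcome gives $\mathbb{H}[X_i\mid\boldsymbol{h}^k_i]=p^k_{\theta,i}\,\mathbb{H}[X_i]$, hence $I(\boldsymbol{h}^k_i;X_i)=(1-p^k_{\theta,i})\,\mathbb{H}[X_i]$; summing over $i$ and $k$ produces the stated equality and completes the chain.

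The routine parts are the subadditivity step and the final composition. The two places I would watch carefully are (i) the conditional-independence factorizations $\mathbb{H}[H\mid X]=\sum_k\mathbb{H}[\boldsymbol{h}^k\mid X]$ and $\mathbb{H}[\boldsymbol{h}^k\mid X]=\sum_i\mathbb{H}[\boldsymbol{h}^k_i\mid X]$, since these are exactly where the mutual independence of the Bernoulli masks across heads and across nodes is used, and neither inequality holds without it; and (ii) the mixed discrete/continuous nature of $\boldsymbol{h}^k_i$, which equals the constant $0$ with positive probability and is continuous otherwise. This makes the scalar computation the one genuinely delicate step: the entropy manipulations must be read in the erasure sense that makes the Drop-Bottleneck identity exact (the drop event either reveals or fully erases $X_i$) rather than as naive differential entropies, so I would present that computation with care and cite the Drop-Bottleneck formulation to pin down the convention.
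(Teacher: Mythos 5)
Your proof is correct, and it reaches the two inequalities by a genuinely different (and more self-contained) route than the paper. For the first inequality $I(H;X)\le\sum_k I(\boldsymbol{h}^k;X)$, the paper proves a two-variable lemma by direct manipulation of the integrals --- showing $I(X;Y\mid Z)-I(X;Y)=-I(X;Z)\le 0$ under the conditional-independence graphical model --- upgrades it to the binary corollary $I(X,Z;Y)\le I(X;Y)+I(Z;Y)$ via the chain rule, and then peels off the hyperedges one at a time. You instead prove the $n$-ary statement in one stroke from subadditivity of entropy plus the factorization $\mathbb{H}[H\mid X]=\sum_k\mathbb{H}[\boldsymbol{h}^k\mid X]$; the two arguments rest on exactly the same hypothesis (mutual independence of the masks, hence conditional independence of the $\boldsymbol{h}^k$ given $X$), and yours is the more economical packaging. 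The larger difference is in the second half: the paper does not prove the per-node inequality or the closing equality at all --- it reduces to $I(\boldsymbol{h}^k;X)\le\sum_i I(\boldsymbol{h}^k_i;X_i)=\sum_i\mathbb{H}[X_i](1-p^k_{\theta,i})$ and then cites \citep{kim2021drop} verbatim --- whereas you rederive both, reusing your lemma for the inequality and computing $\mathbb{H}[X_i\mid\boldsymbol{h}^k_i]=p^k_{\theta,i}\,\mathbb{H}[X_i]$ for the erasure-channel equality. Your explicit handling of the Bernoulli convention is also worth keeping: the factor $(1-p^k_{\theta,i})$ is only correct if $p^k_{\theta,i}$ is read as a \emph{drop} probability (as in Drop-Bottleneck), which sits somewhat uneasily with the indicator semantics of Equation~\ref{equ:shallow}, where larger $p^k_{\theta,i}$ means the node is kept; the paper glosses over this, and your remark pins down the convention under which the identity holds. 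The one point your argument needs but states only implicitly is the same one the paper's graphical model asserts by fiat --- independence of the masks across heads --- without which neither derivation of the first inequality goes through.
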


where $\boldsymbol{h}^k_i$ and $X_i$  is the $i$-th row of $\boldsymbol{h}^k$ and $X$ respectively.  $p^k_{\theta,i}$ is the mask probability in Equation \ref{equ:shallow}. $\mathbb{H}$ is the entropy computation. The equality holds if and only if nodes are independent and hyperedges do not overlap. It's important to clarify that optimizing Equation \ref{equ:multi_drop} does not imply a penalty for overlaps. The second inequality is inspired by \cite{kim2021drop}. The proof of the proposition can be found in Appendix~\ref{app:proof1}.

Therefore, instead of optimizing the intractable objective Equations \ref{equ:ib2}, we optimize its upper bound (i.e. loss function) according to Equations \ref{equ:lowerbound} and \ref{equ:multi_drop}.
\begin{equation}
\begin{aligned}
    \mathcal{L} &= \| Y - \mathcal{F}_l \circ \mathcal{F}_w \circ \mathcal{F}_c (X) \|_2^2 + \beta \sum_{k=1}^K\sum_{i=1}^N \mathbb{H}[X_i] (1 - p^k_{\theta,i}) \\&\geq - I(H, Y) + \beta I(H, X).
\end{aligned}
\end{equation}
In conclusion, The learnable components are the shallow embeddings in Equation \ref{equ:shallow}, the $\operatorname{DimReduction}$ MLP in Equation \ref{equ:readout} and the \textsc{LinearHead} $\mathcal{F}_l$ in Equation \ref{equ:linearhead}. For how we choose the trade-off coefficient $\beta$, see Appendix \ref{app:beta} for more discussions.
\section{Experiments}
\label{sec:experiments}

In this section, we conduct experiments to validate the quality of the learned hyperedges in terms of the predictive performance towards the cognition phenotype outcome. Furthermore, we conduct ablation studies to validate the key components in our model. We also analyze our results both quantitatively and qualitatively.

\subsection{Predictive Performance of Hypereges}







\paragraph{Datasets}
\label{par:real_dataset}

We consider two fMRI datasets: 

1) \textit{Autism Brain Imaging Data Exchange (ABIDE)} \citep{craddock2013neuro} is an open-source dataset. This dataset involved resting-state fMRI of patients from 17 international sites, as well as the anatomical and phenotypic data. Regions are obtained by Craddock 200 atlas \citep{craddock2012whole}. We use the preprocessed version from the official website. For prediction targets, we choose three intelligence quotients: FIQ (full-scale intelligence), VIQ (verbal intelligence quotient), and PIQ (performance intelligence). 

2) \textit{Adolescent Brain Cognitive Development (ABCD)} \citep{casey2018adolescent} is one of the largest public fMRI datasets. Access is limited and requires adherence to a rigorous data request procedure to acquire the data. The data is collected from $11,875$ children aged between 9 to 10 years old. The functional MRI (fMRI) data is collected from children when they were resting and when they performed three tasks (SST, EN-back, MID). We use the ABCD imaging data collected from the baseline (release 2.0) as well as the 2-year follow-up (release 3.0). In conclusion, we obtain $8$ sub-datasets (we refer to them as \textit{datasets} from now on) from $2$ timepoints under $4$ tasks. Regions are obtained by AAL3v1 atlas \citep{rolls2020automated}. For preprocess procedures of the ABCD dataset, please refer to Appendix \ref{app:dataset} for more details. For the prediction target, we consider fluid intelligence as our label. Fluid intelligence reflects the general mental ability and plays a fundamental role in various cognitive functions.

For region features, consistent with previous connectivity-based methods \citep{li2021braingnn,kan2022brain,ktena2018metric,said2023neurograph}, we use a region's Pearson correlation coefficients to all other regions as the region features.
Other details of the data preprocessing and statistics of each dataset are summarized in Appendix \ref{app:dataset}.


\paragraph{Evaluation Metric}
To evaluate the quality of hyperedges obtained by \mname, we use CPM~\citep{shen2017using}, a standard model that could evaluate the relevance between the connectivity and the prediction target, due to its high impact in the community. 
In the original implementation of CPM, weights of pairwise edges are obtained by Pearson correlation between nodes. These weights, as pairwise connectivity, are fed into the CPM. CPM will output a metric that measures the overall correlation between edges and the prediction target, which can be considered as a measure of edge qualities. This process is formulated as 
\begin{equation}
\label{equ:metric0}
    r^\prime = \operatorname{CPM}(\boldsymbol{w}_p, Y),
\end{equation}
where $\boldsymbol{w}_p \in \mathbb{R}^{K_p}$ denotes the pairwise edge weights and $K_p$ is the total number of pairwise edges. $r^\prime$ is a metric that measures the quality of weights based on positive and negative correlations to the outcome. 

To evaluate the quality of the learned weights for our model, we replace the pairwise edge weights with the learned high-order weights $\boldsymbol{w}_h \in \mathbb{R}^{K_h}$, and thus adjust Equation \ref{equ:metric0} to 
\begin{equation}
\label{equ:metric}
    r = \operatorname{CPM}(\boldsymbol{w}_h, Y),
\end{equation}

Comparing $r$ to $r^\prime$ reflects the quality of learned weights in terms of the prediction performance since it measures the overall correlation between weights and the prediction target. In our model, $\boldsymbol{w}_h = \boldsymbol{w}$, which is the learned hyperedge weights.

\begin{table*}
\caption{$r$ values of our hyperedges compared to baselines on the ABCD dataset. Results are averaged over 10 runs. Deterministic methods do not have standard deviations.}
\label{tab:main}
\resizebox{\linewidth}{!}{
\begin{tabular}{ccllllllll}
    \toprule
{Type} & {Model}& {SST 1} & {EN-back 1} & {MID 1}  & {Rest 1}& {SST 2} & {EN-back 2} & {MID 2} & {Rest 2} \\
         \midrule
Standard &pairwise& $0.113$ & $0.218$ &  $0.099$ & $0.164$  & $0.201$&$0.322$&$0.299$ & $0.289$ \\
\shline
\multirow{3}{*}{\shortstack{Hypergraph\\Construction}}
&$k$NN &  $0.115$ &  $0.268$ & $0.168$ &  $0.127$  & $0.257$&$0.266$&$0.238$ & $0.315$\\
&$l_1$ hypergraph  & $0.099$ &$0.223$  &  $0.125$ & $0.126$ &$0.145$&$0.295$&$0.242$& $0.259$\\
&$l_2$ hypergraph &$0.096_{\pm0.002}$  & $0.197_{\pm0.003}$ &  $0.118_{\pm0.003}$ & $0.157_{\pm0.016}$ &$0.203_{\pm0.005}$&$0.272_{\pm0.004}$&$0.289_{\pm0.011}$&$0.307_{\pm0.006}$\\
\shline
\multirow{4}{*}{\shortstack{Connectivity\\based\\Prediction}}
&BrainNetGNN &  $0.227_{\pm0.060}$ & $0.287_{\pm0.043}$ &$0.266_{\pm0.046}$ & $0.221_{\pm0.040}$ &$0.468_{\pm0.058}$&$0.480_{\pm0.068}$& $0.506_{\pm0.057}$&$0.453_{\pm0.028}$\\
&BrainGB & $0.190_{\pm0.073}$ & $0.214_{\pm0.051}$ &$0.265_{\pm0.048}$& $0.176_{\pm0.066}$ & $0.447_{\pm0.089}$ & $0.483_{\pm0.077}$ &$0.458_{\pm0.064}$ & $0.432_{\pm0.076}$\\
&BrainGNN & $0.262_{\pm0.030}$ & $0.235_{\pm0.032}$ &  $0.260_{\pm0.049}$ & $0.185_{\pm0.058}$ & $0.455_{\pm0.028}$ & $0.391_{\pm0.077}$&$0.445_{\pm0.078}$ &$0.368_{\pm0.041}$\\
&BrainNetTF &$\underline{0.327_{\pm 0.084}}$ & $\underline{0.338_{\pm 0.056}}$  & $\underline{0.370_{\pm 0.098}}$ & $\mathbf{0.334_{\pm 0.084}}$ & $\underline{0.633_{\pm0.178}}$& $\underline{0.631_{\pm0.142}}$ & $\underline{0.629_{\pm0.123}}$ & $\underline{0.588_{\pm0.138}}$ \\ 
\shline
Ours  & \mname & $\mathbf{0.361_{\pm0.058}}$  & $\mathbf{0.348_{\pm0.061}}$ & $\mathbf{0.386_{\pm0.060}}$ &  $\underline{0.223_{\pm0.056}}$ &$\mathbf{0.738_{\pm0.054}}$ & $\mathbf{0.714_{\pm0.037}}$&$\mathbf{0.816_{\pm0.053}}$ &$\mathbf{0.730_{\pm0.049}}$\\
         \bottomrule
    \end{tabular}}
\end{table*}

\begin{table}[t]
 \caption{$r$ values of our hyperedges compared to baselines on the ABIDE dataset. Results are averaged over 10 runs. Deterministic methods do not have standard deviations.}
    \label{tab:abide}
\centering
    \resizebox{\linewidth}{!}{
    \begin{tabular}{cclll}
    \toprule
      {Type} & {Model} & {FIQ} & {VIQ} & {PIQ}  \\
      \midrule
        Standard & pairwise & $0.052$ & $0.124$ & $0.056$ \\
        \shline
        \multirow{3}{*}{\shortstack{Hypergraph\\Construction}}
        & $k$NN & $0.023$ & $0.093$ & $0.056$ \\
        & $l_1$ hypergraph & $0.043$ & $0.125$ & $0.061$ \\
        & $l_2$ hypergraph & $0.148_{\pm0.000}$ & $0.141_{\pm0.014}$ & $0.063_{\pm0.004}$ \\
        \shline
        \multirow{4}{*}{\shortstack{Connectivity\\based\\Prediction}}
        &BrainNetGNN & $\underline{0.162_{\pm0.042}}$ & $\underline{0.199_{\pm0.042}}$ & $\underline{0.223_{\pm0.025}}$ \\
        &BrainGB & $0.125_{\pm0.119}$ & $0.154_{\pm0.068}$ & $0.157_{\pm0.053}$ \\
        &BrainGNN & $0.105_{\pm0.041}$ & $0.176_{\pm0.049}$ & $0.159_{\pm0.051}$ \\
        &BrainNetTF & $0.132_{\pm0.111}$ & $0.176_{\pm0.053}$ & $0.180_{\pm0.054}$ \\ 
        \shline
        Ours  & \mname & $\mathbf{0.181_{\pm0.040}}$ & $\mathbf{0.204_{\pm0.031}}$ & $\mathbf{0.245_{\pm0.042}}$\\
     \bottomrule
    \end{tabular}
    }
\end{table}

\begin{table*}[t]
\caption{Ablation studies on the masking mechanism. Results are averaged over 10 runs.}
\label{tab:ablation_mask}
\resizebox{\linewidth}{!}{
\begin{tabular}{cccccccccccc}
    \toprule
{Model} & {SST 1} & {EN-back 1} & {MID 1} &{Rest 1}& {SST 2} & {EN-back 2} & {MID 2}  &{Rest 2}\\
         \midrule
 \mname  & $\mathbf{0.361_{\pm0.058}}$ &$\mathbf{0.348_{\pm0.061}}$ & $\mathbf{0.386_{\pm0.060}}$ & $\underline{0.223_{\pm0.056}}$ &$\mathbf{0.738_{\pm0.054}}$& $\mathbf{0.714_{\pm0.037}}$ &  $\mathbf{0.816_{\pm0.053}}$ & $\mathbf{0.730_{\pm0.049}}$\\
 \mname$_\mathrm{NoMask}$  & $0.297_{\pm0.035}$ & $0.274_{\pm0.057}$ & $\underline{0.323_{\pm0.059}}$& $0.221_{\pm0.034}$&$0.653_{\pm0.036}$&$0.599_{\pm0.059}$ &$0.757_{\pm0.021}$ &$0.543_{\pm0.038}$\\
 \mname$_\mathrm{RndMask}$  & $0.256_{\pm0.069}$ & $0.191_{\pm0.046}$ & $0.255_{\pm0.080}$ & $0.190_{\pm0.051}$ &$0.541_{\pm0.069}$& $0.514_{\pm0.038}$&$0.598_{\pm0.064}$ &$0.482_{\pm0.083}$\\
 \mname$_\mathrm{SoftMask}$  & $\underline{0.343_{\pm0.042}}$ & $\underline{0.314_{\pm0.040}}$ & $0.320_{\pm0.055}$& $\mathbf{0.245_{\pm0.061}}$ &$\underline{0.707_{\pm0.042}}$ &$\underline{0.662_{\pm0.058}}$&$\underline{0.796_{\pm0.031}}$ & $\underline{0.655_{\pm0.030}}$ \\
         \bottomrule
    \end{tabular}}
\end{table*}

\paragraph{Baselines}
We compare our method with $3$ classes of baselines: 1) \textit{standard} method, which is exactly the classical method that predicts outcomes based on pairwise edges \citep{shen2017using, DADI2019115, bhaa407}. The comparison with standard methods shows whether the high-order connectivity has its advantage over the classical pairwise one or not. 
2) \textit{hypergraph construction} methods. We consider $k$NN \citep{huang2009video}, $l_1$ hypergraph \citep{wang2015visual}, and $l_2$ hypergraph \citep{jin2019robust}.
3) \textit{connectivity-based  phenotypic prediction} methods, which are state-of-the-art predictive models based on brain connectivity. We consider BrainNetGNN \citep{mahmood2021deep}, BrainGNN \citep{li2021braingnn}, and BrainNetTF \citep{kan2022brain}. BrainGB \citep{cui2022braingb} is a study of different brain graph neural network designs and we include its best design as a baseline. Since none of these models are able to identify hyperedge structures of brain regions, we input their last layer embeddings (each entry as a weight) into the CPM model. Note that our weights $\boldsymbol{w}$ are also last layer embeddings in \mname{}.

\paragraph{Implementation \& Training Details}
Hyperparameter choices and other details can be found in Appendix \ref{app:training}.


\paragraph{Results}
We report $r$ values by CPM in Table \ref{tab:abide} and Table \ref{tab:main}. As we can see, on the ABIDE dataset, \mname{} consistently outperforms all the baselines on different targets, with an average improvement of $8.8\%$ compared to the state-of-the-art model.
On the ABCD dataset, \mname{} outperforms the state-of-the-art predictive models on $7$ datasets of ABCD, with an average improvement of \avgimp{}. The results demonstrate our model is able to learn informative hyperedges towards different phenotypic outcomes from fMRI data of various brain states. Rest 1 is the only dataset that our model reaches the second-best. We conduct analyses and propose the potential reasons in Appendix~\ref{app:why_fail}.

Further, the comparison between our model and the pairwise baseline demonstrates the superiority of incorporating high-order relationships over relying solely on pairwise ones.

\paragraph{Runtime} \mname{} outperforms all other deep learning baselines in efficiency, with $87\%$ faster than the second-fastest one (BrainNetTF). Refer to Appendix \ref{app:runtime} for more runtime details.

\paragraph{Ablation Studies}
We conduct an ablation study on the effect of our masking mechanism. Specifically, we compare our model with $3$ variants: 1) \mname$_\mathrm{RndMask}$: Replace the learnable masks with random masks with the same sparsity, initialized at the beginning of training. 2) \mname$_\mathrm{NoMask}$: Do not mask at all, which means all nodes and their features are visible to each head. 3) \mname$_\mathrm{SoftMask}$: Remove the indicator function and use $p^k_{\theta, i}$ directly in Equation \ref{equ:shallow}. Ablation results are shown in Table \ref{tab:ablation_mask}. We find the original \mname{} and the \mname$_\mathrm{SoftMask}$ outperform all other variants, which demonstrates the effect of learnable masks. Moreover, the original \mname{} is better than its soft version \mname$_\mathrm{SoftMask}$, which demonstrates our sparse and succinct representations preserve better information than smooth ones. Other ablation studies such as the choices of the number of hyperedges and choices of $\beta$ can be found in Appendix \ref{app:beta}.

\begin{figure*}
\begin{center}
    \begin{subfigure}[t]{0.32\textwidth}
        \includegraphics[width=\textwidth]{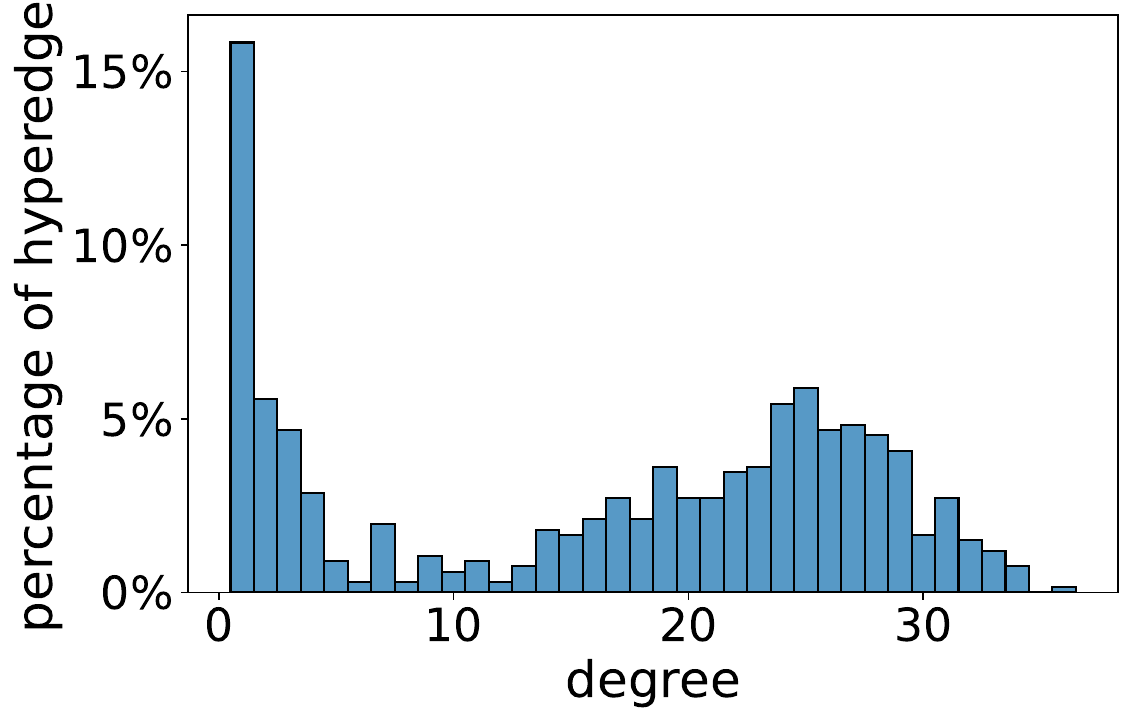}
        \caption{}
        \label{fig:degree_dist}
    \end{subfigure}
    \hspace{1em}
    \begin{subfigure}[t]{0.27\textwidth}
        \includegraphics[width=\textwidth]{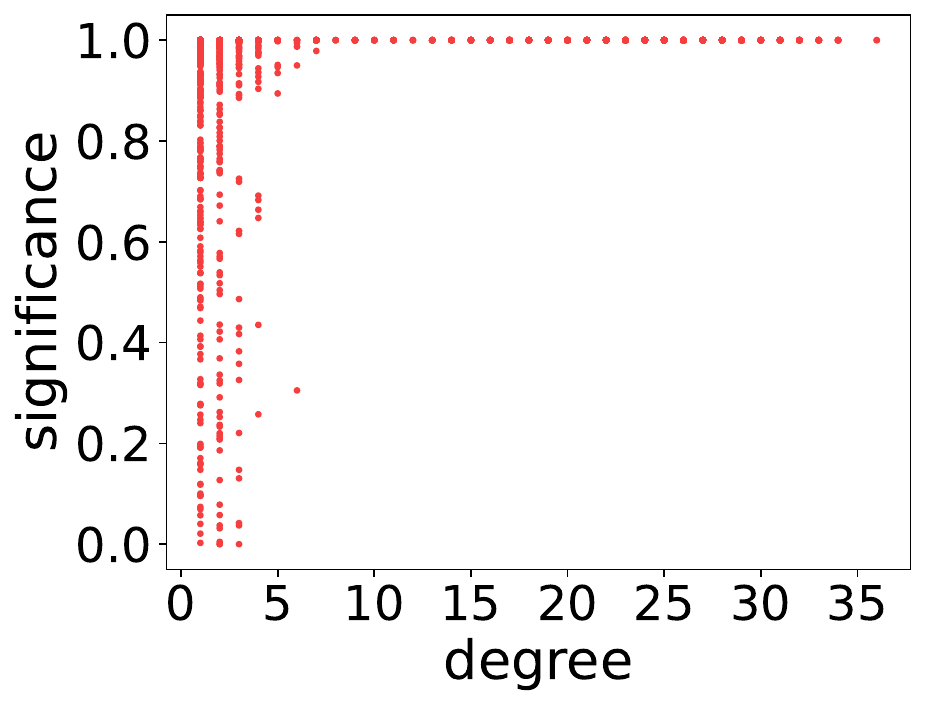}
        \caption{}
        \label{fig:degree_sig}
    \end{subfigure}
        \hspace{1em}
    \begin{subfigure}[t]{0.28\textwidth}
        \includegraphics[width=\textwidth]{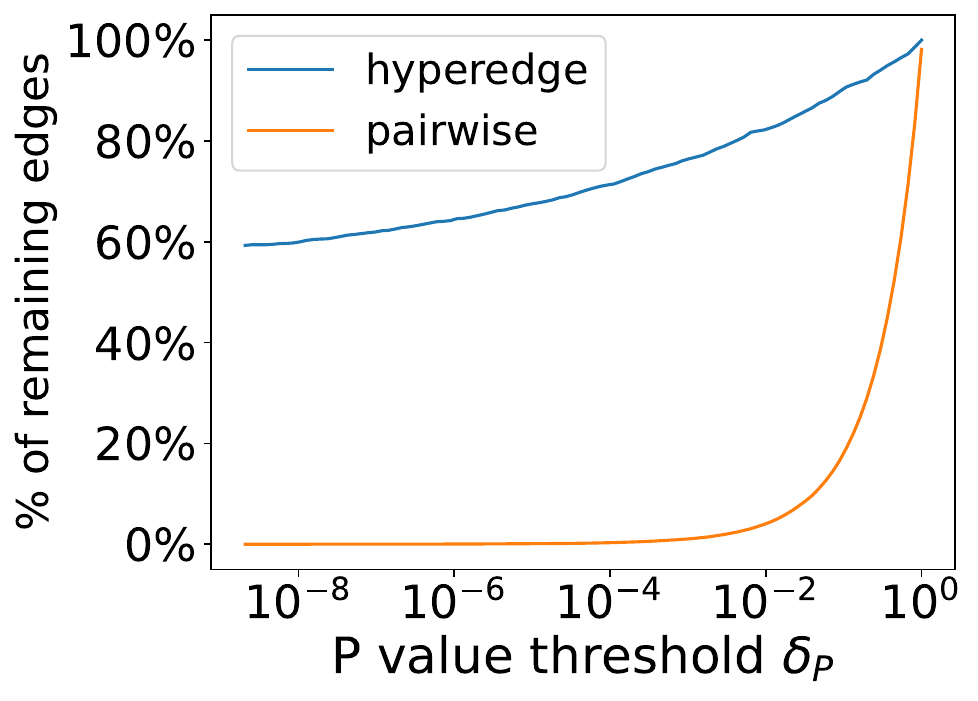}
        \caption{}
        \label{fig:cmp_sig_pair}
    \end{subfigure}
    \caption{Hyperedge profiles. \textbf{(a)} Hyperedge degree distribution of learned hyperedges. \textbf{(b)} Correlation between hyperedge degree and significance. \textbf{(c)} Comparison between the number of hyperedges and pairwise edges under different significance thresholds. The total number of hyperedges is $32$. And the total number of pairwise edges is $26,896$.}
    \label{fig:hyperedge_profile}
    \end{center}
\end{figure*}

\paragraph{Additional Experiments on Synthetic Dataset}
Since there are no ground-truth hyperedges in real-world datasets of learning informative hyperedges towards a specific outcome, we construct a synthetic dataset to verify if our model can recover the correct hyperedge structure under the MIMR objective. We use the precision, recall, and F1 score to measure the correctness of the learned hyperedges with respect to the ground truth. Although it is challenging to learn hyperedges when only supervised by the task label, we find that our model reaches high performances, with an average improvement of 28.3\% in terms of the F1 score, compared to the strongest baselines. Details about the synthetic experiments can be found in Appendix \ref{app:syn}.

\paragraph{Additional Experiments on Model Fit}
We further discuss the model's goodness of fit in Appendix \ref{app:fit}, with Mean Squared Error (MSE) as the evaluation metric. Our model outperforms the state-of-the-art model in 9 out of 11 datasets, with an average improvement of $11.9\%$.


\subsection{Further Analysis}
In this subsection, we analyze the results of our model. We mainly use the ABCD dataset in the analysis since it is much larger than the ABIDE dataset.
\paragraph{Hyperedge Degree Distribution}
We plot the hyperedge degree distribution in Figure \ref{fig:degree_dist}. We find there are two distinct clusters in the figure. The first cluster is hyperedges with degree $\leq 5$. $1$-degree and $2$-degree hyperedges are special cases of our method: $1$-degree hyperedges are individual nodes, which imply the contribution of individual regions to the cognition. $2$-degree hyperedges reveal the importance of traditional pairwise connectivity. The other cluster concentrates around degree $25$, which implies the importance of relationships of multiple regions.


\paragraph{Hyperedges with Higher Degree are More Significant}
Since CPM conducts a significance test (details can be found in Appendix~\ref{sec:app_cpm}) on pairwise edges and hyperedges internally based on a linear regression model, we can obtain a P-value for each hyperedge from the significance test. We define the significance of a hyperedge as $1-P_v \in [0, 1]$ where $P_v$ is the P-value of that hyperedge.

The relationship between hyperedge degree and its significance is shown in Figure ~\ref{fig:degree_sig}. In this figure, we find a strong positive correlation between a hyperedge's degree and its significance, which indicates that interactions of multiple brain regions play more important roles in cognition than pairwise or individual ones. It is also worth mentioning that there is a turning point around degree $5$, which corresponds to the valley around $5$ in Figure \ref{fig:degree_dist}.

\paragraph{High-order relationships are Better than Pairwise Ones}
To compare the significance in cognition between pairwise edges and learned hyperedges, we plot the number of remaining edges under different thresholds in Figure \ref{fig:cmp_sig_pair}. We find out that the learned hyperedges are much more significant than pairwise ones. Also note that even if we set the threshold to an extremely strict value ($1\times 10^{-8}$), there are still $60\%$ hyperedges considered significant. This evidence shows that our high-order relationships are much more significant than the traditional pairwise connectivity, which implies relationships involving multiple brain regions could be much more essential in cognition.






\paragraph{Hyperedge Case Study} We visualize the most significant hyperedge of the EN-back task in Figure \ref{fig:hyperedge1}. We observe a coordinated interaction of numerous brain regions, each fulfilling specific roles. Notably, some of these regions serve multi-functional purposes:
\begin{figure*}
    \centering
    \includegraphics[width=\linewidth]{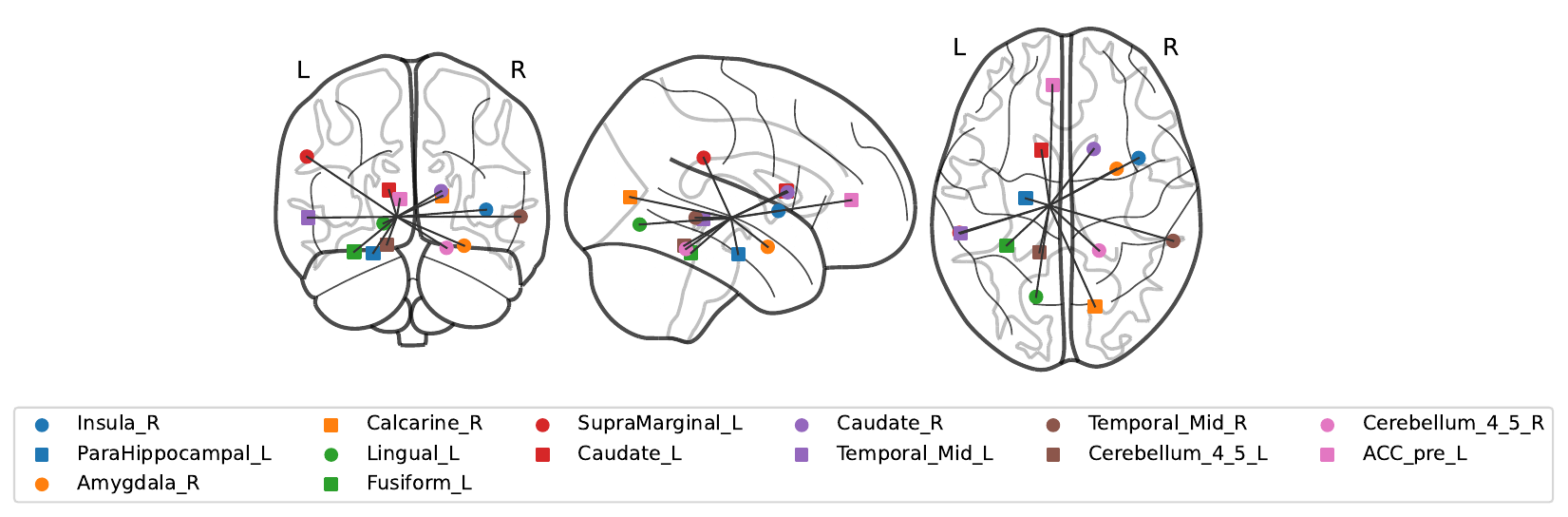}
    \caption{Visualization of the most significant hyperedge of the EN-back task.}
    \label{fig:hyperedge1}
\end{figure*}

\begin{itemize}
\item \textbf{Memory Processing} \textit{ParaHippocampal\_L}, \textit{Temporal Mid}: Essential for memory encoding and retrieval, these regions are integral to the EN-back task, facilitating the recall of previously viewed images.

\item \textbf{Emotional Processing} \textit{Amygdala\_R}: The amygdala is crucial for the processing of emotions, such as fear and pleasure. Since the EN-back task involves emotional stimuli, it is reasonable that the region is connected by the hyperedge.

\item \textbf{Visual Processing}: \textit{Calcarine\_R}, \textit{Lingual\_L}, \textit{Fusiform\_L}. These regions are responsible for visual perception and some of them are related to complex visual contents like symbols and human faces, which were presented during the fMRI task.

\item \textbf{Sensory} \textit{SupraMarginal\_L}: It is responsible for interpreting tactile sensors and perceiving limbs location. Its involvement is likely due to the requirement for participants to engage in specific physical actions, such as pressing buttons, during the task. \textit{Temporal Mid}: It functions in multi-modal sensory integration.

\item \textbf{Motor Control} \textit{Cerebellum}: It is primarily responsible for muscle control. \textit{Caudate}: It plays a crucial role in motor processes.  Its involvement is likely attributed to participants engaging in physical actions, like pressing buttons.

\item \textbf{Cognitive Control} \textit{ACC\_pre\_L}:  In the EN-back task, this region is likely crucial for maintaining focus, error detection and correction, conflict management in working memory, and modulating emotional responses to the task's demands.
\end{itemize}

More visualizations about individual region importance can be found in Appendix \ref{app:viz}.

\section{Conclusion}
In this work, we proposed \mname{} for identifying maximally informative yet minimally redundant (MIMR) high-order relationships of brain regions. To effectively optimize our model, we further proposed a novel information bottleneck framework and derived its theory. Our method outperforms state-of-the-art models. The result analysis shows the effectiveness of our model. We expect such advancements could benefit clinical studies, providing insights into neurological disorders, and offering improved diagnostic tools in neurology and other related fields. 
\paragraph{Limitations}
\mname{} only considers static high-order relations. Given that fMRI tasks are dynamic, including temporal changes and interactions, it will be interesting to study the evolution of these high-order relationships.

Additionally, \mname{} does not offer a method for interpreting complex high-order relationships. This limitation is not specific to \mname{}, but is a common challenge in analyzing such relationships.We propose a hierarchical strategy that has the potential to interpret them to a certain extent, which is detailed in Appendix \ref{app:interpret}

\section*{Impact Statement}
This paper presents work whose goal is to advance the field of Machine Learning. There are many potential societal consequences of our work, none of which we feel must be specifically highlighted here.

\bibliography{main}

\begin{thebibliography}{85}
\providecommand{\natexlab}[1]{#1}
\providecommand{\url}[1]{\texttt{#1}}
\expandafter\ifx\csname urlstyle\endcsname\relax
  \providecommand{\doi}[1]{doi: #1}\else
  \providecommand{\doi}{doi: \begingroup \urlstyle{rm}\Url}\fi

\bibitem[Ahmedt-Aristizabal et~al.(2021)Ahmedt-Aristizabal, Armin, Denman, Fookes, and Petersson]{ahmedt2021graph}
Ahmedt-Aristizabal, D., Armin, M.~A., Denman, S., Fookes, C., and Petersson, L.
\newblock Graph-based deep learning for medical diagnosis and analysis: past, present and future.
\newblock \emph{Sensors}, 21\penalty0 (14):\penalty0 4758, 2021.

\bibitem[Alemi et~al.(2016)Alemi, Fischer, Dillon, and Murphy]{alemi2016deep}
Alemi, A.~A., Fischer, I., Dillon, J.~V., and Murphy, K.
\newblock Deep variational information bottleneck.
\newblock \emph{arXiv preprint arXiv:1612.00410}, 2016.

\bibitem[Bai et~al.(2021)Bai, Zhang, and Torr]{bai2021hypergraph}
Bai, S., Zhang, F., and Torr, P.~H.
\newblock Hypergraph convolution and hypergraph attention.
\newblock \emph{Pattern Recognition}, 110:\penalty0 107637, 2021.

\bibitem[Bannadabhavi et~al.(2023)Bannadabhavi, Lee, Deng, Ying, and Li]{bannadabhavi2023community}
Bannadabhavi, A., Lee, S., Deng, W., Ying, R., and Li, X.
\newblock Community-aware transformer for autism prediction in fmri connectome.
\newblock In \emph{International Conference on Medical Image Computing and Computer-Assisted Intervention}, pp.\  287--297. Springer, 2023.

\bibitem[Barrasso-Catanzaro \& Eslinger(2016)Barrasso-Catanzaro and Eslinger]{barrasso2016neurobiological}
Barrasso-Catanzaro, C. and Eslinger, P.~J.
\newblock Neurobiological bases of executive function and social-emotional development: Typical and atypical brain changes.
\newblock \emph{Family Relations}, 65\penalty0 (1):\penalty0 108--119, 2016.

\bibitem[Battiston et~al.(2020)Battiston, Cencetti, Iacopini, Latora, Lucas, Patania, Young, and Petri]{battiston2020networks}
Battiston, F., Cencetti, G., Iacopini, I., Latora, V., Lucas, M., Patania, A., Young, J.-G., and Petri, G.
\newblock Networks beyond pairwise interactions: Structure and dynamics.
\newblock \emph{Physics Reports}, 874:\penalty0 1--92, 2020.

\bibitem[Battiston et~al.(2021)Battiston, Amico, Barrat, Bianconi, Ferraz~de Arruda, Franceschiello, Iacopini, K{\'e}fi, Latora, Moreno, et~al.]{battiston2021physics}
Battiston, F., Amico, E., Barrat, A., Bianconi, G., Ferraz~de Arruda, G., Franceschiello, B., Iacopini, I., K{\'e}fi, S., Latora, V., Moreno, Y., et~al.
\newblock The physics of higher-order interactions in complex systems.
\newblock \emph{Nature Physics}, 17\penalty0 (10):\penalty0 1093--1098, 2021.

\bibitem[Bengio et~al.(2013)Bengio, L{\'e}onard, and Courville]{bengio2013estimating}
Bengio, Y., L{\'e}onard, N., and Courville, A.
\newblock Estimating or propagating gradients through stochastic neurons for conditional computation.
\newblock \emph{arXiv preprint arXiv:1308.3432}, 2013.

\bibitem[Bick et~al.(2023)Bick, Gross, Harrington, and Schaub]{bick2023higher}
Bick, C., Gross, E., Harrington, H.~A., and Schaub, M.~T.
\newblock What are higher-order networks?
\newblock \emph{SIAM Review}, 65\penalty0 (3):\penalty0 686--731, 2023.

\bibitem[Boyle et~al.(2023)Boyle, Connaughton, McGlinchey, Knight, De~Looze, Carey, Stern, Robertson, Kenny, and Whelan]{boyle2023connectome}
Boyle, R., Connaughton, M., McGlinchey, E., Knight, S.~P., De~Looze, C., Carey, D., Stern, Y., Robertson, I.~H., Kenny, R.~A., and Whelan, R.
\newblock Connectome-based predictive modelling of cognitive reserve using task-based functional connectivity.
\newblock \emph{European Journal of Neuroscience}, 57\penalty0 (3):\penalty0 490--510, 2023.

\bibitem[Casey et~al.(2018)Casey, Cannonier, Conley, Cohen, Barch, Heitzeg, Soules, Teslovich, Dellarco, Garavan, et~al.]{casey2018adolescent}
Casey, B.~J., Cannonier, T., Conley, M.~I., Cohen, A.~O., Barch, D.~M., Heitzeg, M.~M., Soules, M.~E., Teslovich, T., Dellarco, D.~V., Garavan, H., et~al.
\newblock The adolescent brain cognitive development ({ABCD}) study: {I}maging acquisition across 21 sites.
\newblock \emph{Developmental cognitive neuroscience}, 32:\penalty0 43--54, 2018.

\bibitem[Chen \& Ying(2023)Chen and Ying]{chen2023tempme}
Chen, J. and Ying, R.
\newblock Tempme: Towards the explainability of temporal graph neural networks via motif discovery.
\newblock \emph{arXiv preprint arXiv:2310.19324}, 2023.

\bibitem[Craddock et~al.(2013)Craddock, Benhajali, Chu, Chouinard, Evans, Jakab, Khundrakpam, Lewis, Li, Milham, et~al.]{craddock2013neuro}
Craddock, C., Benhajali, Y., Chu, C., Chouinard, F., Evans, A., Jakab, A., Khundrakpam, B.~S., Lewis, J.~D., Li, Q., Milham, M., et~al.
\newblock The neuro bureau preprocessing initiative: open sharing of preprocessed neuroimaging data and derivatives.
\newblock \emph{Frontiers in Neuroinformatics}, 7\penalty0 (27):\penalty0 5, 2013.

\bibitem[Craddock et~al.(2012)Craddock, James, Holtzheimer~III, Hu, and Mayberg]{craddock2012whole}
Craddock, R.~C., James, G.~A., Holtzheimer~III, P.~E., Hu, X.~P., and Mayberg, H.~S.
\newblock A whole brain fmri atlas generated via spatially constrained spectral clustering.
\newblock \emph{Human brain mapping}, 33\penalty0 (8):\penalty0 1914--1928, 2012.

\bibitem[Cui et~al.(2022{\natexlab{a}})Cui, Dai, Zhu, Kan, Gu, Lukemire, Zhan, He, Guo, and Yang]{cui2022braingb}
Cui, H., Dai, W., Zhu, Y., Kan, X., Gu, A. A.~C., Lukemire, J., Zhan, L., He, L., Guo, Y., and Yang, C.
\newblock Braingb: {A} benchmark for brain network analysis with graph neural networks.
\newblock \emph{IEEE transactions on medical imaging}, 42\penalty0 (2):\penalty0 493--506, 2022{\natexlab{a}}.

\bibitem[Cui et~al.(2022{\natexlab{b}})Cui, Dai, Zhu, Li, He, and Yang]{cui2022interpretable}
Cui, H., Dai, W., Zhu, Y., Li, X., He, L., and Yang, C.
\newblock Interpretable graph neural networks for connectome-based brain disorder analysis.
\newblock In \emph{International Conference on Medical Image Computing and Computer-Assisted Intervention}, pp.\  375--385. Springer, 2022{\natexlab{b}}.

\bibitem[Dadi et~al.(2019)Dadi, Rahim, Abraham, Chyzhyk, Milham, Thirion, and Varoquaux]{DADI2019115}
Dadi, K., Rahim, M., Abraham, A., Chyzhyk, D., Milham, M., Thirion, B., and Varoquaux, G.
\newblock Benchmarking functional connectome-based predictive models for resting-state {fMRI}.
\newblock \emph{NeuroImage}, 192:\penalty0 115--134, 2019.
\newblock ISSN 1053-8119.
\newblock \doi{https://doi.org/10.1016/j.neuroimage.2019.02.062}.
\newblock URL \url{https://www.sciencedirect.com/science/article/pii/S1053811919301594}.

\bibitem[Do et~al.(2020)Do, Yoon, Hooi, and Shin]{do2020structural}
Do, M.~T., Yoon, S.-e., Hooi, B., and Shin, K.
\newblock Structural patterns and generative models of real-world hypergraphs.
\newblock In \emph{Proceedings of the 26th ACM SIGKDD international conference on knowledge discovery \& data mining}, pp.\  176--186, 2020.

\bibitem[Dubois et~al.(2018)Dubois, Galdi, Han, Paul, and Adolphs]{dubois2018resting}
Dubois, J., Galdi, P., Han, Y., Paul, L.~K., and Adolphs, R.
\newblock Resting-state functional brain connectivity best predicts the personality dimension of openness to experience.
\newblock \emph{Personality neuroscience}, 1:\penalty0 e6, 2018.

\bibitem[Finn et~al.(2015)Finn, Shen, Scheinost, Rosenberg, Huang, Chun, Papademetris, and Constable]{finn2015functional}
Finn, E.~S., Shen, X., Scheinost, D., Rosenberg, M.~D., Huang, J., Chun, M.~M., Papademetris, X., and Constable, R.~T.
\newblock Functional connectome fingerprinting: {I}dentifying individuals using patterns of brain connectivity.
\newblock \emph{Nature neuroscience}, 18\penalty0 (11):\penalty0 1664--1671, 2015.

\bibitem[Gao et~al.(2019)Gao, Greene, Constable, and Scheinost]{gao2019combining}
Gao, S., Greene, A.~S., Constable, R.~T., and Scheinost, D.
\newblock Combining multiple connectomes improves predictive modeling of phenotypic measures.
\newblock \emph{Neuroimage}, 201:\penalty0 116038, 2019.

\bibitem[Goyal et~al.(2019)Goyal, Islam, Strouse, Ahmed, Botvinick, Larochelle, Bengio, and Levine]{goyal2019infobot}
Goyal, A., Islam, R., Strouse, D., Ahmed, Z., Botvinick, M., Larochelle, H., Bengio, Y., and Levine, S.
\newblock Infobot: {T}ransfer and exploration via the information bottleneck.
\newblock \emph{arXiv preprint arXiv:1901.10902}, 2019.

\bibitem[Greene et~al.(2018)Greene, Gao, Scheinost, and Constable]{greene2018task}
Greene, A.~S., Gao, S., Scheinost, D., and Constable, R.~T.
\newblock Task-induced brain state manipulation improves prediction of individual traits.
\newblock \emph{Nature communications}, 9\penalty0 (1):\penalty0 2807, 2018.

\bibitem[Horien et~al.(2019)Horien, Shen, Scheinost, and Constable]{horien2019individual}
Horien, C., Shen, X., Scheinost, D., and Constable, R.~T.
\newblock The individual functional connectome is unique and stable over months to years.
\newblock \emph{Neuroimage}, 189:\penalty0 676--687, 2019.

\bibitem[Huang et~al.(2015)Huang, Elhoseiny, Elgammal, and Yang]{huang2015learning}
Huang, S., Elhoseiny, M., Elgammal, A., and Yang, D.
\newblock Learning hypergraph-regularized attribute predictors, 2015.

\bibitem[Huang et~al.(2009)Huang, Liu, and Metaxas]{huang2009video}
Huang, Y., Liu, Q., and Metaxas, D.
\newblock Video object segmentation by hypergraph cut.
\newblock In \emph{2009 IEEE Conference on Computer Vision and Pattern Recognition}, pp.\  1738--1745, 2009.
\newblock \doi{10.1109/CVPR.2009.5206795}.

\bibitem[Igl et~al.(2019)Igl, Ciosek, Li, Tschiatschek, Zhang, Devlin, and Hofmann]{igl2019generalization}
Igl, M., Ciosek, K., Li, Y., Tschiatschek, S., Zhang, C., Devlin, S., and Hofmann, K.
\newblock Generalization in reinforcement learning with selective noise injection and information bottleneck.
\newblock \emph{Advances in neural information processing systems}, 32, 2019.

\bibitem[Jin et~al.(2019)Jin, Yu, Gao, Gao, Sun, and Li]{jin2019robust}
Jin, T., Yu, Z., Gao, Y., Gao, S., Sun, X., and Li, C.
\newblock Robust $l_2$-hypergraph and its applications.
\newblock \emph{Information Sciences}, 501:\penalty0 708--723, 2019.

\bibitem[Joshi et~al.(2011)Joshi, Scheinost, Okuda, Belhachemi, Murphy, Staib, and Papademetris]{joshi2011unified}
Joshi, A., Scheinost, D., Okuda, H., Belhachemi, D., Murphy, I., Staib, L.~H., and Papademetris, X.
\newblock Unified framework for development, deployment and robust testing of neuroimaging algorithms.
\newblock \emph{Neuroinformatics}, 9:\penalty0 69--84, 2011.

\bibitem[Joslyn et~al.(2019)Joslyn, Aksoy, Arendt, Jenkins, Praggastis, Purvine, and Zalewski]{joslyn2019high}
Joslyn, C., Aksoy, S., Arendt, D., Jenkins, L., Praggastis, B., Purvine, E., and Zalewski, M.
\newblock High performance hypergraph analytics of domain name system relationships.
\newblock In \emph{HICSS 2019 symposium on cybersecurity big data analytics}, 2019.

\bibitem[Kan et~al.(2022{\natexlab{a}})Kan, Cui, Lukemire, Guo, and Yang]{kan2022fbnetgen}
Kan, X., Cui, H., Lukemire, J., Guo, Y., and Yang, C.
\newblock Fbnetgen: {T}ask-aware {GNN}-based {fMRI} analysis via functional brain network generation.
\newblock In \emph{International Conference on Medical Imaging with Deep Learning}, pp.\  618--637. PMLR, 2022{\natexlab{a}}.

\bibitem[Kan et~al.(2022{\natexlab{b}})Kan, Dai, Cui, Zhang, Guo, and Yang]{kan2022brain}
Kan, X., Dai, W., Cui, H., Zhang, Z., Guo, Y., and Yang, C.
\newblock Brain network transformer.
\newblock \emph{Advances in Neural Information Processing Systems}, 35:\penalty0 25586--25599, 2022{\natexlab{b}}.

\bibitem[Kim et~al.(2021)Kim, Kim, Woo, and Kim]{kim2021drop}
Kim, J., Kim, M., Woo, D., and Kim, G.
\newblock Drop-{B}ottleneck: {L}earning discrete compressed representation for noise-robust exploration.
\newblock \emph{arXiv preprint arXiv:2103.12300}, 2021.

\bibitem[Kingma \& Welling(2013)Kingma and Welling]{kingma2013auto}
Kingma, D.~P. and Welling, M.
\newblock Auto-encoding variational bayes.
\newblock \emph{arXiv preprint arXiv:1312.6114}, 2013.

\bibitem[Knauff \& Wolf(2010)Knauff and Wolf]{knauff2010complex}
Knauff, M. and Wolf, A.~G.
\newblock Complex cognition: the science of human reasoning, problem-solving, and decision-making, 2010.

\bibitem[Ktena et~al.(2018)Ktena, Parisot, Ferrante, Rajchl, Lee, Glocker, and Rueckert]{ktena2018metric}
Ktena, S.~I., Parisot, S., Ferrante, E., Rajchl, M., Lee, M., Glocker, B., and Rueckert, D.
\newblock Metric learning with spectral graph convolutions on brain connectivity networks.
\newblock \emph{NeuroImage}, 169:\penalty0 431--442, 2018.

\bibitem[Kucian et~al.(2006)Kucian, Loenneker, Dietrich, Dosch, Martin, and Von~Aster]{kucian2006impaired}
Kucian, K., Loenneker, T., Dietrich, T., Dosch, M., Martin, E., and Von~Aster, M.
\newblock Impaired neural networks for approximate calculation in dyscalculic children: a functional {MRI} study.
\newblock \emph{Behavioral and Brain Functions}, 2\penalty0 (1):\penalty0 1--17, 2006.

\bibitem[Kucian et~al.(2008)Kucian, von Aster, Loenneker, Dietrich, and Martin]{kucian2008development}
Kucian, K., von Aster, M., Loenneker, T., Dietrich, T., and Martin, E.
\newblock Development of neural networks for exact and approximate calculation: A {fMRI} study.
\newblock \emph{Developmental neuropsychology}, 33\penalty0 (4):\penalty0 447--473, 2008.

\bibitem[Li et~al.(2015{\natexlab{a}})Li, Hou, Liu, Yue, He, and Zuo]{li2015toward}
Li, H.-J., Hou, X.-H., Liu, H.-H., Yue, C.-L., He, Y., and Zuo, X.-N.
\newblock Toward systems neuroscience in mild cognitive impairment and {Alzheimer}'s disease: {A} meta-analysis of 75 {fMRI} studies.
\newblock \emph{Human brain mapping}, 36\penalty0 (3):\penalty0 1217--1232, 2015{\natexlab{a}}.

\bibitem[Li et~al.(2015{\natexlab{b}})Li, Hou, Liu, Yue, Lu, and Zuo]{li2015putting}
Li, H.-J., Hou, X.-H., Liu, H.-H., Yue, C.-L., Lu, G.-M., and Zuo, X.-N.
\newblock Putting age-related task activation into large-scale brain networks: a meta-analysis of 114 {fMRI} studies on healthy aging.
\newblock \emph{Neuroscience \& Biobehavioral Reviews}, 57:\penalty0 156--174, 2015{\natexlab{b}}.

\bibitem[Li et~al.(2019)Li, Dvornek, Zhou, Zhuang, Ventola, and Duncan]{li2019graph}
Li, X., Dvornek, N.~C., Zhou, Y., Zhuang, J., Ventola, P., and Duncan, J.~S.
\newblock Graph neural network for interpreting task-{fMRI} biomarkers.
\newblock In \emph{Medical Image Computing and Computer Assisted Intervention--MICCAI 2019: 22nd International Conference, Shenzhen, China, October 13--17, 2019, Proceedings, Part V 22}, pp.\  485--493. Springer, 2019.

\bibitem[Li et~al.(2021)Li, Zhou, Dvornek, Zhang, Gao, Zhuang, Scheinost, Staib, Ventola, and Duncan]{li2021braingnn}
Li, X., Zhou, Y., Dvornek, N., Zhang, M., Gao, S., Zhuang, J., Scheinost, D., Staib, L.~H., Ventola, P., and Duncan, J.~S.
\newblock Braingnn: {I}nterpretable brain graph neural network for {fMRI} analysis.
\newblock \emph{Medical Image Analysis}, 74:\penalty0 102233, 2021.

\bibitem[Li et~al.(2022)Li, Li, Li, Zhou, Xu, Yang, Chen, and Guo]{li2022construction}
Li, Y., Li, Q., Li, T., Zhou, Z., Xu, Y., Yang, Y., Chen, J., and Guo, H.
\newblock Construction and multiple feature classification based on a high-order functional hypernetwork on {fMRI} data.
\newblock \emph{Frontiers in Neuroscience}, 16:\penalty0 848363, 2022.

\bibitem[Liu et~al.(2017)Liu, Sun, Wang, Liu, and Tao]{liu2017elastic}
Liu, Q., Sun, Y., Wang, C., Liu, T., and Tao, D.
\newblock Elastic net hypergraph learning for image clustering and semi-supervised classification.
\newblock \emph{IEEE Transactions on Image Processing}, 26\penalty0 (1):\penalty0 452--463, 2017.
\newblock \doi{10.1109/TIP.2016.2621671}.

\bibitem[Logue \& Gould(2014)Logue and Gould]{logue2014neural}
Logue, S.~F. and Gould, T.~J.
\newblock The neural and genetic basis of executive function: attention, cognitive flexibility, and response inhibition.
\newblock \emph{Pharmacology Biochemistry and Behavior}, 123:\penalty0 45--54, 2014.

\bibitem[Luo et~al.(2019)Luo, Liu, Guan, Yu, and Yang]{luo2019significance}
Luo, Y., Liu, P., Guan, T., Yu, J., and Yang, Y.
\newblock Significance-aware information bottleneck for domain adaptive semantic segmentation.
\newblock In \emph{Proceedings of the IEEE/CVF International Conference on Computer Vision}, pp.\  6778--6787, 2019.

\bibitem[Mahmood et~al.(2021)Mahmood, Fu, Calhoun, and Plis]{mahmood2021deep}
Mahmood, U., Fu, Z., Calhoun, V.~D., and Plis, S.
\newblock A deep learning model for data-driven discovery of functional connectivity.
\newblock \emph{Algorithms}, 14\penalty0 (3):\penalty0 75, 2021.

\bibitem[Miao et~al.(2022{\natexlab{a}})Miao, Liu, and Li]{miao2022interpretable}
Miao, S., Liu, M., and Li, P.
\newblock Interpretable and generalizable graph learning via stochastic attention mechanism.
\newblock In \emph{International Conference on Machine Learning}, pp.\  15524--15543. PMLR, 2022{\natexlab{a}}.

\bibitem[Miao et~al.(2022{\natexlab{b}})Miao, Luo, Liu, and Li]{miao2022interpretable2}
Miao, S., Luo, Y., Liu, M., and Li, P.
\newblock Interpretable geometric deep learning via learnable randomness injection.
\newblock \emph{arXiv preprint arXiv:2210.16966}, 2022{\natexlab{b}}.

\bibitem[Oord et~al.(2017)Oord, Vinyals, and Kavukcuoglu]{oord2017neural}
Oord, A. v.~d., Vinyals, O., and Kavukcuoglu, K.
\newblock Neural discrete representation learning.
\newblock \emph{arXiv preprint arXiv:1711.00937}, 2017.

\bibitem[Peng et~al.(2023)Peng, Duan, Wang, Ma, Jiang, Tu, Jiang, and Zhao]{peng2023came}
Peng, R., Duan, Q., Wang, H., Ma, J., Jiang, Y., Tu, Y., Jiang, X., and Zhao, J.
\newblock Came: Contrastive automated model evaluation.
\newblock In \emph{Proceedings of the IEEE/CVF International Conference on Computer Vision}, pp.\  20121--20132, 2023.

\bibitem[Peng et~al.(2024)Peng, Zou, Wang, Zeng, Huang, and Zhao]{peng2024energy}
Peng, R., Zou, H., Wang, H., Zeng, Y., Huang, Z., and Zhao, J.
\newblock Energy-based automated model evaluation.
\newblock \emph{arXiv preprint arXiv:2401.12689}, 2024.

\bibitem[Peng et~al.(2018)Peng, Kanazawa, Toyer, Abbeel, and Levine]{peng2018variational}
Peng, X.~B., Kanazawa, A., Toyer, S., Abbeel, P., and Levine, S.
\newblock Variational discriminator bottleneck: {I}mproving imitation learning, inverse {RL}, and {GAN}s by constraining information flow.
\newblock \emph{arXiv preprint arXiv:1810.00821}, 2018.

\bibitem[Pornpattananangkul et~al.(2016)Pornpattananangkul, Hariri, Harada, Mano, Komeda, Parrish, Sadato, Iidaka, and Chiao]{pornpattananangkul2016cultural}
Pornpattananangkul, N., Hariri, A.~R., Harada, T., Mano, Y., Komeda, H., Parrish, T.~B., Sadato, N., Iidaka, T., and Chiao, J.~Y.
\newblock Cultural influences on neural basis of inhibitory control.
\newblock \emph{NeuroImage}, 139:\penalty0 114--126, 2016.

\bibitem[Reineberg et~al.(2022)Reineberg, Banich, Wager, and Friedman]{reineberg2022context}
Reineberg, A.~E., Banich, M.~T., Wager, T.~D., and Friedman, N.~P.
\newblock Context-specific activations are a hallmark of the neural basis of individual differences in general executive function.
\newblock \emph{NeuroImage}, 249:\penalty0 118845, 2022.

\bibitem[Rolls et~al.(2020)Rolls, Huang, Lin, Feng, and Joliot]{rolls2020automated}
Rolls, E.~T., Huang, C.-C., Lin, C.-P., Feng, J., and Joliot, M.
\newblock Automated anatomical labelling atlas 3.
\newblock \emph{Neuroimage}, 206:\penalty0 116189, 2020.

\bibitem[Rosas et~al.(2019)Rosas, Mediano, Gastpar, and Jensen]{rosas2019quantifying}
Rosas, F.~E., Mediano, P.~A., Gastpar, M., and Jensen, H.~J.
\newblock Quantifying high-order interdependencies via multivariate extensions of the mutual information.
\newblock \emph{Physical Review E}, 100\penalty0 (3):\penalty0 032305, 2019.

\bibitem[Rosenberg et~al.(2015)Rosenberg, Finn, Constable, and Chun]{rosenberg2015predicting}
Rosenberg, M.~D., Finn, E.~S., Constable, R.~T., and Chun, M.~M.
\newblock Predicting moment-to-moment attentional state.
\newblock \emph{Neuroimage}, 114:\penalty0 249--256, 2015.

\bibitem[Rosenberg et~al.(2016)Rosenberg, Finn, Scheinost, Papademetris, Shen, Constable, and Chun]{rosenberg2016neuromarker}
Rosenberg, M.~D., Finn, E.~S., Scheinost, D., Papademetris, X., Shen, X., Constable, R.~T., and Chun, M.~M.
\newblock A neuromarker of sustained attention from whole-brain functional connectivity.
\newblock \emph{Nature neuroscience}, 19\penalty0 (1):\penalty0 165--171, 2016.

\bibitem[Rosenberg et~al.(2020)Rosenberg, Scheinost, Greene, Avery, Kwon, Finn, Ramani, Qiu, Constable, and Chun]{rosenberg2020functional}
Rosenberg, M.~D., Scheinost, D., Greene, A.~S., Avery, E.~W., Kwon, Y.~H., Finn, E.~S., Ramani, R., Qiu, M., Constable, R.~T., and Chun, M.~M.
\newblock Functional connectivity predicts changes in attention observed across minutes, days, and months.
\newblock \emph{Proceedings of the National Academy of Sciences}, 117\penalty0 (7):\penalty0 3797--3807, 2020.

\bibitem[Said et~al.(2023)Said, Bayrak, Derr, Shabbir, Moyer, Chang, and Koutsoukos]{said2023neurograph}
Said, A., Bayrak, R.~G., Derr, T., Shabbir, M., Moyer, D., Chang, C., and Koutsoukos, X.
\newblock Neurograph: {B}enchmarks for graph machine learning in brain connectomics.
\newblock \emph{arXiv preprint arXiv:2306.06202}, 2023.

\bibitem[Santoro et~al.(2023)Santoro, Battiston, Petri, and Amico]{santoro2023higher}
Santoro, A., Battiston, F., Petri, G., and Amico, E.
\newblock Higher-order organization of multivariate time series.
\newblock \emph{Nature Physics}, 19\penalty0 (2):\penalty0 221--229, 2023.

\bibitem[Satterthwaite et~al.(2015)Satterthwaite, Wolf, Roalf, Ruparel, Erus, Vandekar, Gennatas, Elliott, Smith, Hakonarson, et~al.]{satterthwaite2015linked}
Satterthwaite, T.~D., Wolf, D.~H., Roalf, D.~R., Ruparel, K., Erus, G., Vandekar, S., Gennatas, E.~D., Elliott, M.~A., Smith, A., Hakonarson, H., et~al.
\newblock Linked sex differences in cognition and functional connectivity in youth.
\newblock \emph{Cerebral cortex}, 25\penalty0 (9):\penalty0 2383--2394, 2015.

\bibitem[Semedo et~al.(2019)Semedo, Zandvakili, Machens, Byron, and Kohn]{semedo2019cortical}
Semedo, J.~D., Zandvakili, A., Machens, C.~K., Byron, M.~Y., and Kohn, A.
\newblock Cortical areas interact through a communication subspace.
\newblock \emph{Neuron}, 102\penalty0 (1):\penalty0 249--259, 2019.

\bibitem[Shen et~al.(2010)Shen, Papademetris, and Constable]{shen2010graph}
Shen, X., Papademetris, X., and Constable, R.~T.
\newblock Graph-theory based parcellation of functional subunits in the brain from resting-state fmri data.
\newblock \emph{Neuroimage}, 50\penalty0 (3):\penalty0 1027--1035, 2010.

\bibitem[Shen et~al.(2013)Shen, Tokoglu, Papademetris, and Constable]{shenGroupwiseWholebrainParcellation2013}
Shen, X., Tokoglu, F., Papademetris, X., and Constable, R.
\newblock Groupwise whole-brain parcellation from resting-state {{fMRI}} data for network node identification.
\newblock \emph{NeuroImage}, 82:\penalty0 403--415, November 2013.
\newblock ISSN 10538119.
\newblock \doi{10.1016/j.neuroimage.2013.05.081}.

\bibitem[Shen et~al.(2017)Shen, Finn, Scheinost, Rosenberg, Chun, Papademetris, and Constable]{shen2017using}
Shen, X., Finn, E.~S., Scheinost, D., Rosenberg, M.~D., Chun, M.~M., Papademetris, X., and Constable, R.~T.
\newblock Using connectome-based predictive modeling to predict individual behavior from brain connectivity.
\newblock \emph{nature protocols}, 12\penalty0 (3):\penalty0 506--518, 2017.

\bibitem[Thomas et~al.(2022)Thomas, R{\'e}, and Poldrack]{thomas2022self}
Thomas, A., R{\'e}, C., and Poldrack, R.
\newblock Self-supervised learning of brain dynamics from broad neuroimaging data.
\newblock \emph{Advances in Neural Information Processing Systems}, 35:\penalty0 21255--21269, 2022.

\bibitem[Tishby et~al.(2000)Tishby, Pereira, and Bialek]{tishby2000information}
Tishby, N., Pereira, F.~C., and Bialek, W.
\newblock The information bottleneck method.
\newblock \emph{arXiv preprint physics/0004057}, 2000.

\bibitem[Varley et~al.(2023)Varley, Pope, Faskowitz, and Sporns]{varley2023multivariate}
Varley, T.~F., Pope, M., Faskowitz, J., and Sporns, O.
\newblock Multivariate information theory uncovers synergistic subsystems of the human cerebral cortex.
\newblock \emph{Communications biology}, 6\penalty0 (1):\penalty0 451, 2023.

\bibitem[Wang et~al.(2023{\natexlab{a}})Wang, Chen, and Su]{wang2023toward}
Wang, J., Chen, J., and Su, B.
\newblock Toward auto-evaluation with confidence-based category relation-aware regression.
\newblock In \emph{ICASSP 2023-2023 IEEE International Conference on Acoustics, Speech and Signal Processing (ICASSP)}, pp.\  1--5. IEEE, 2023{\natexlab{a}}.

\bibitem[Wang et~al.(2015)Wang, Liu, and Wu]{wang2015visual}
Wang, M., Liu, X., and Wu, X.
\newblock Visual classification by $l_1$-hypergraph modeling.
\newblock \emph{IEEE Transactions on Knowledge and Data Engineering}, 27\penalty0 (9):\penalty0 2564--2574, 2015.
\newblock \doi{10.1109/TKDE.2015.2415497}.

\bibitem[Wang et~al.(2020)Wang, He, Yu, Qiu, An, and Rabinovich]{wang2020learning}
Wang, R., He, X., Yu, R., Qiu, W., An, B., and Rabinovich, Z.
\newblock Learning efficient multi-agent communication: {A}n information bottleneck approach.
\newblock In \emph{International Conference on Machine Learning}, pp.\  9908--9918. PMLR, 2020.

\bibitem[Wang et~al.(2023{\natexlab{b}})Wang, Liu, Xu, Tian, and Zhao]{wang2023inference}
Wang, S., Liu, Y., Xu, W., Tian, X., and Zhao, Y.
\newblock Inference-based statistical network analysis uncovers star-like brain functional architectures for internalizing psychopathology in children.
\newblock \emph{arXiv preprint arXiv:2309.11349}, 2023{\natexlab{b}}.

\bibitem[Wang et~al.(2016)Wang, Kang, Kemmer, and Guo]{wang2016efficient}
Wang, Y., Kang, J., Kemmer, P.~B., and Guo, Y.
\newblock An efficient and reliable statistical method for estimating functional connectivity in large scale brain networks using partial correlation.
\newblock \emph{Frontiers in neuroscience}, 10:\penalty0 123, 2016.

\bibitem[Wang et~al.(2021)Wang, Goerlich, Ai, Aleman, Luo, and Xu]{bhaa407}
Wang, Z., Goerlich, K.~S., Ai, H., Aleman, A., Luo, Y.-j., and Xu, P.
\newblock {Connectome-Based Predictive Modeling of Individual Anxiety}.
\newblock \emph{Cerebral Cortex}, 31\penalty0 (6):\penalty0 3006--3020, 01 2021.
\newblock ISSN 1047-3211.
\newblock \doi{10.1093/cercor/bhaa407}.
\newblock URL \url{https://doi.org/10.1093/cercor/bhaa407}.

\bibitem[Wu et~al.(2020)Wu, Ren, Li, and Leskovec]{wu2020graph}
Wu, T., Ren, H., Li, P., and Leskovec, J.
\newblock Graph information bottleneck.
\newblock \emph{Advances in Neural Information Processing Systems}, 33:\penalty0 20437--20448, 2020.

\bibitem[Xiao et~al.(2019)Xiao, Wang, Kassani, Zhang, Bai, Stephen, Wilson, Calhoun, and Wang]{xiao2019multi}
Xiao, L., Wang, J., Kassani, P.~H., Zhang, Y., Bai, Y., Stephen, J.~M., Wilson, T.~W., Calhoun, V.~D., and Wang, Y.-P.
\newblock Multi-hypergraph learning-based brain functional connectivity analysis in {fMRI} data.
\newblock \emph{IEEE transactions on medical imaging}, 39\penalty0 (5):\penalty0 1746--1758, 2019.

\bibitem[Xu et~al.(2021)Xu, Cheng, Luo, Chen, and Zhang]{xu2021infogcl}
Xu, D., Cheng, W., Luo, D., Chen, H., and Zhang, X.
\newblock Info{GCL}: {I}nformation-aware graph contrastive learning.
\newblock \emph{Advances in Neural Information Processing Systems}, 34:\penalty0 30414--30425, 2021.

\bibitem[Ying et~al.(2018)Ying, You, Morris, Ren, Hamilton, and Leskovec]{ying2018hierarchical}
Ying, Z., You, J., Morris, C., Ren, X., Hamilton, W., and Leskovec, J.
\newblock Hierarchical graph representation learning with differentiable pooling.
\newblock \emph{Advances in neural information processing systems}, 31, 2018.

\bibitem[Yu et~al.(2020)Yu, Xu, Rong, Bian, Huang, and He]{yu2020graph}
Yu, J., Xu, T., Rong, Y., Bian, Y., Huang, J., and He, R.
\newblock Graph information bottleneck for subgraph recognition.
\newblock \emph{arXiv preprint arXiv:2010.05563}, 2020.

\bibitem[Yu et~al.(2022)Yu, Cao, and He]{yu2022improving}
Yu, J., Cao, J., and He, R.
\newblock Improving subgraph recognition with variational graph information bottleneck.
\newblock In \emph{Proceedings of the IEEE/CVF Conference on Computer Vision and Pattern Recognition}, pp.\  19396--19405, 2022.

\bibitem[Zhang et~al.(2018)Zhang, Lin, Gao, and BNRist]{zhang2018dynamic}
Zhang, Z., Lin, H., Gao, Y., and BNRist, K.
\newblock Dynamic hypergraph structure learning.
\newblock In \emph{IJCAI}, pp.\  3162--3169, 2018.

\bibitem[Zhang et~al.(2022)Zhang, Feng, Ying, and Gao]{zhang2022deep}
Zhang, Z., Feng, Y., Ying, S., and Gao, Y.
\newblock Deep hypergraph structure learning, 2022.

\bibitem[Zu et~al.(2016)Zu, Gao, Munsell, Kim, Peng, Zhu, Gao, Zhang, Shen, and Wu]{zu2016identifying}
Zu, C., Gao, Y., Munsell, B., Kim, M., Peng, Z., Zhu, Y., Gao, W., Zhang, D., Shen, D., and Wu, G.
\newblock Identifying high order brain connectome biomarkers via learning on hypergraph.
\newblock In \emph{Machine Learning in Medical Imaging: 7th International Workshop, MLMI 2016, Held in Conjunction with MICCAI 2016, Athens, Greece, October 17, 2016, Proceedings 7}, pp.\  1--9. Springer, 2016.

\end{thebibliography}
\bibliographystyle{icml2024}

\newpage
\appendix
\onecolumn
\section{Proof of the Upper Bound}
\label{app:proof1}
In this section, we prove the upper bound of $I(H; X)$ in \multiheadname{} in Equation \ref{equ:multi_drop}.{}

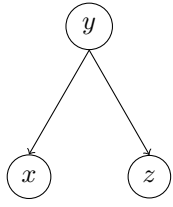
\begin{figure}[H]
    \centering
    \begin{tikzpicture}[
roundnode/.style={circle,draw},
]
\node[roundnode]   at (0,2)   (y)                              {$y$};
\node[roundnode] at (-0.8,0)     (x)     {$x$};
\node[roundnode]   at (0.8,0)  (z)     {$z$};
\draw[->] (y.south) -- (x.north);
\draw[->] (y.south) -- (z.north);
\end{tikzpicture}

    \caption{The graphical model of random variables $X$, $Y$ and $Z$.}
    \label{fig:graphical}
\end{figure}

\begin{lemma}
    \label{lemma:1}
    Given random variables $X$, $Y$ and $Z$. Their relationships are described in the graphical model illustrated in Figure \ref{fig:graphical}. We have
\begin{equation}
    I(X; Y | Z) \leq I(X; Y)
\end{equation}
\end{lemma}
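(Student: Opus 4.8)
The plan is to exploit the conditional-independence structure encoded by the graphical model in Figure \ref{fig:graphical}. Since $Y$ is the common parent of both $X$ and $Z$ (the ``fork'' $X \leftarrow Y \rightarrow Z$), the joint distribution factorizes as $p(x,y,z) = p(y)\,p(x\mid y)\,p(z\mid y)$. Hence, conditioned on $Y$, the variables $X$ and $Z$ are independent, so that $I(X; Z \mid Y) = 0$. This single observation is the crux of the argument.

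The main tool is the chain rule for mutual information, which I would apply to $I(X; (Y,Z))$ expanded in two different orders:
\begin{equation}
    I(X; Y, Z) = I(X; Y) + I(X; Z \mid Y) = I(X; Z) + I(X; Y \mid Z).
\end{equation}
Substituting the conditional independence $I(X; Z \mid Y) = 0$ into the left decomposition and equating it with the right decomposition yields $I(X; Y) = I(X; Z) + I(X; Y \mid Z)$, and therefore $I(X; Y \mid Z) = I(X; Y) - I(X; Z)$.

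To finish, I would invoke the non-negativity of mutual information, $I(X; Z) \geq 0$, which immediately gives $I(X; Y \mid Z) \leq I(X; Y)$, as claimed.

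I do not expect any serious obstacle here: the only step requiring care is the justification that $I(X; Z \mid Y) = 0$, which follows purely from reading off the factorization induced by the fork structure in Figure \ref{fig:graphical}. Everything else is a routine application of the chain rule and the sign of mutual information. The intended use downstream is presumably to instantiate this lemma with the Markov chain $X \leftrightarrow Y \leftrightarrow H$ (identifying the roles of the variables appropriately) in order to establish the first inequality $I(H; X) \leq \sum_k I(\boldsymbol{h}^k; X)$ of Proposition, so I would keep the variable naming general enough to reuse.
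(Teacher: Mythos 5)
Your proof is correct, and it rests on exactly the same structural fact the paper uses---the conditional independence $X \perp Z \mid Y$ induced by the fork $X \leftarrow Y \rightarrow Z$---but it reaches the conclusion by a cleaner route. The paper expands $I(X;Y\mid Z) - I(X;Y)$ as an explicit integral over densities, substitutes the factorization $p(x,z\mid y) = p(x\mid y)\,p(z\mid y)$ inside the logarithm, and grinds the expression down to $-I(X;Z) \leq 0$. You instead expand $I(X;Y,Z)$ by the chain rule in two orders, kill the $I(X;Z\mid Y)$ term using the same conditional independence, and read off the identity $I(X;Y\mid Z) = I(X;Y) - I(X;Z)$ directly; non-negativity of $I(X;Z)$ finishes it. The two arguments establish the identical intermediate identity, so there is no difference in substance, but yours is shorter, avoids any measure-theoretic bookkeeping with densities, and makes the role of the interaction term $I(X;Z)$ transparent. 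Your closing remark about downstream use is slightly off target: the paper does not apply the lemma directly to the chain $X \leftrightarrow Y \leftrightarrow H$, but first derives Corollary 1.1, namely $I(X,Z;Y) \leq I(X;Y) + I(Z;Y)$, and then peels off the hyperedges $\boldsymbol{h}^k$ one at a time (with $X$ playing the role of the common parent $Y$ in the lemma); this does not affect the validity of your proof of the lemma itself.
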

\begin{proof}

\begin{equation}
    \begin{aligned}
I(X; &Y | Z) - I(X; Y) \\
&= \int p(x, y, z) \log \frac{p(z)p(x, y, z)}{p(x, z)p(y, z)} \mathrm{d}x \mathrm{d}z \mathrm{d}y \\ &- \int p(x, y) \log \frac{p(x, y)}{p(x)p(y)} \mathrm{d}x \mathrm{d}y\\
&=\int p(x, y, z) \log \frac{p(z)p(x, y, z)}{p(x, z)p(y, z)} \mathrm{d}x \mathrm{d}z \mathrm{d}y \\ &- \int p(x, y, z) \log \frac{p(x, y)}{p(x)p(y)} \mathrm{d}x \mathrm{d}y \mathrm{d}z\\
&=\int p(x, y, z) \log \frac{p(z)p(x,y,z)p(x)p(y)}{p(x,z)p(y,z)p(x,y)}\mathrm{d}x \mathrm{d}z \mathrm{d}y\\
&=\int p(x, y, z) \log \frac{p(z)p(x,z|y)p(y)p(x)p(y)}{p(x,z)p(y,z)p(x,y)}\mathrm{d}x \mathrm{d}z \mathrm{d}y\\
&=\int p(x, y, z) \log \frac{p(z)p(x|y)p(z|y)p(y)p(x)p(y)}{p(x,z)p(y,z)p(x,y)}\mathrm{d}x \mathrm{d}z \mathrm{d}y\\
&=\int p(x, y, z) \log \frac{p(x)p(z)}{p(x,z)}\mathrm{d}x \mathrm{d}z \mathrm{d}y\\
&=\int p(x, z) \log \frac{p(x)p(z)}{p(x,z)}\mathrm{d}x \mathrm{d}z\\
&=-I(X; Z) \leq 0
    \end{aligned}
\end{equation}
\\which finishes the proof.

\end{proof}
\begin{corollary}
\label{cor:ib}
    Given the same graphical model \ref{fig:graphical}, we have
    \begin{equation}
        I(X, Z; Y) \leq I(X; Y) + I(Z; Y)    
    \end{equation}

\end{corollary}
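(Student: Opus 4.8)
The plan is to obtain the corollary directly from the chain rule for mutual information combined with Lemma \ref{lemma:1}, rather than re-deriving an integral identity from scratch. The key observation is that Lemma \ref{lemma:1} has already done the substantive work (it encodes the conditional-independence structure $X \perp Z \mid Y$ of the graphical model in Figure \ref{fig:graphical}), so all that remains is an exact algebraic decomposition followed by a single substitution.

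Concretely, I would proceed in three steps. First, I would apply the chain rule for mutual information to the joint variable $(X, Z)$ against $Y$, which gives the exact identity $I(X, Z; Y) = I(Z; Y) + I(X; Y \mid Z)$. This step uses no assumptions about the graphical model; it is a general identity valid for any three random variables. Second, I would invoke Lemma \ref{lemma:1}, which under the graphical model of Figure \ref{fig:graphical} yields the inequality $I(X; Y \mid Z) \leq I(X; Y)$. Third, substituting this bound into the chain-rule identity immediately produces $I(X, Z; Y) \leq I(Z; Y) + I(X; Y)$, which is the claimed inequality.

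The only point requiring a moment of care — and it is a minor one rather than a genuine obstacle — is to group the chain rule so that the conditional term that appears matches the hypothesis of Lemma \ref{lemma:1}. Here the graphical model is symmetric in $X$ and $Z$ (both are children of $Y$), so Lemma \ref{lemma:1} holds equally with the roles of $X$ and $Z$ interchanged. Consequently either decomposition works: one may instead write $I(X, Z; Y) = I(X; Y) + I(Z; Y \mid X)$ and bound $I(Z; Y \mid X) \leq I(Z; Y)$. I would note this symmetry explicitly so that the reader sees the choice of grouping is immaterial. With that remark in place the proof is complete in two lines, and no new integral manipulation beyond what Lemma \ref{lemma:1} already establishes is needed.
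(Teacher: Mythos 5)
Your proof is correct and is essentially identical to the paper's own argument: the paper also applies the chain rule to write $I(X,Z;Y) = I(X;Y\mid Z) + I(Z;Y)$ and then bounds the conditional term via Lemma~\ref{lemma:1}. Your added remark about the symmetry of the two possible groupings is a harmless clarification but not a different route.
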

\begin{proof}
Using the chain rule of mutual information, we obtain
\begin{equation}
    I(X,Z;Y) = I(X;Y|Z) + I(Z; Y)
\end{equation}
According to Lemma~\ref{lemma:1}, we have $I(X; Y|Z) \leq I(X; Y)$, which finishes the proof.

\end{proof}

\begin{theorem}
    For random variables in Equation~\ref{equ:multi_drop}, we have
    \begin{equation}
        I(H;X) \leq \sum_{k=1}^K I(\boldsymbol{h}^k; X)
    \end{equation}
\end{theorem}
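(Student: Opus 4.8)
The plan is to prove the bound by induction on the number of hyperedges $K$, using Corollary~\ref{cor:ib} as the workhorse at each step. The structural fact I would establish first is that, conditioned on the input $X$, the hyperedges $\boldsymbol{h}^1, \ldots, \boldsymbol{h}^K$ are mutually independent. This holds because each $\boldsymbol{h}^k = \boldsymbol{m}^k \odot X$ is a deterministic function of $X$ together with its own mask $\boldsymbol{m}^k$, and the masks are generated independently across the $K$ heads (the heads share no randomness beyond their common dependence on $X$). Hence, once $X$ is fixed, the only remaining randomness in $\boldsymbol{h}^k$ comes from $\boldsymbol{m}^k$, and these are independent across $k$, giving the conditional independence $\boldsymbol{h}^{k} \perp \boldsymbol{h}^{k'} \mid X$ for $k \neq k'$.

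With this in hand, I would set up the induction. The base case $K=1$ is the trivial identity $I(\boldsymbol{h}^1; X) \leq I(\boldsymbol{h}^1; X)$. For the inductive step, write $Z_1 = (\boldsymbol{h}^1, \ldots, \boldsymbol{h}^{K-1})$ and $Z_2 = \boldsymbol{h}^K$. By the conditional-independence fact above, $Z_1$ and $Z_2$ are conditionally independent given $X$, so the triple $(X; Z_1, Z_2)$ obeys exactly the graphical model of Figure~\ref{fig:graphical} with the input $X$ playing the role of the common parent. Applying Corollary~\ref{cor:ib} under the substitution $Y \mapsto X$, $X \mapsto Z_1$, $Z \mapsto Z_2$ yields $I(Z_1, Z_2; X) \leq I(Z_1; X) + I(Z_2; X)$, that is, $I(\boldsymbol{h}^1, \ldots, \boldsymbol{h}^K; X) \leq I(\boldsymbol{h}^1, \ldots, \boldsymbol{h}^{K-1}; X) + I(\boldsymbol{h}^K; X)$. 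Invoking the inductive hypothesis on the first term then produces $\sum_{k=1}^{K} I(\boldsymbol{h}^k; X)$, which closes the induction.

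The main obstacle is not the induction itself, which is a routine unrolling of the corollary, but verifying the conditional-independence hypothesis carefully enough to justify the appeal to Figure~\ref{fig:graphical} at every step. In particular, I would need to confirm that the \emph{grouped} variable $Z_1$ remains conditionally independent of $\boldsymbol{h}^K$ given $X$; this is immediate from the mask independence, but it deserves an explicit sentence, since Corollary~\ref{cor:ib} is stated only for a single pair of children and I am collapsing $K-1$ of them into one block. A secondary point worth flagging is the notational overload: the symbol $X$ in Lemma~\ref{lemma:1} and Corollary~\ref{cor:ib} denotes a generic child variable, whereas in this theorem the input $X$ occupies the parent role, so I would spell out the substitution $Y \mapsto X$ to prevent confusion when transcribing the inequality.
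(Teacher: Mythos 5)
Your proposal is correct and follows essentially the same route as the paper: both arguments peel off one hyperedge at a time and apply Corollary~\ref{cor:ib} to the pair (single hyperedge, block of remaining hyperedges), justified by the conditional independence of the $\boldsymbol{h}^k$ given $X$; the paper writes this as an explicitly unrolled telescoping chain over $\boldsymbol{h}^{k:K}$ while you phrase it as induction, which is the same computation. Your extra care in verifying that the grouped block remains conditionally independent of the remaining child given $X$ is a point the paper handles only implicitly by drawing the plate-notation graphical model, so your write-up is, if anything, slightly more explicit.
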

\begin{figure}[H]
    \centering
        \begin{tikzpicture}[
roundnode/.style={circle,draw, minimum size=0.8cm},
]
\node[roundnode]   at (0,2)   (x)                              {$x$};
\node[roundnode] at (0, 0)     (hk)     {$\boldsymbol{h}^k$};
\node[draw, minimum size=2cm] at (0,0) (r) {};
\node[] at (0.7,-0.7) (t) {$K$};

\draw[->] (x.south) -- (hk.north);
\end{tikzpicture}
    \caption{The graphical model of random variables $\boldsymbol{h}^k$ and $X$}
    \label{fig:graphical2}
\end{figure}

\begin{proof}
    According to the definitions of $X$ and $H$, which are described in Section \ref{sec:approach}, we can draw a graphical model of them in Figure~\ref{fig:graphical2}. Define a new random variable $\boldsymbol{h}^{k_1:k_2} = [\boldsymbol{h}^{k_1}, \boldsymbol{h}^{k_1+1}, \cdots, \boldsymbol{h}^{k_2-1}, \boldsymbol{h}^{k_2}]$, which is a concatenation from $\boldsymbol{h}^{k_1}$ to $\boldsymbol{h}^{k_2}$. According to Corollary~\ref{cor:ib} we have
    \begin{equation}
    \begin{aligned}
        I(H; X) &\leq I(\boldsymbol{h}^1, X) + I(\boldsymbol{h}^{2:K}, X)\\
        &\leq I(\boldsymbol{h}^1, X) + I(\boldsymbol{h}^2, X) + I(\boldsymbol{h}^{3:K}, X)\\
        &\leq I(\boldsymbol{h}^1, X) + I(\boldsymbol{h}^2, X) + I(\boldsymbol{h}^3, X) + \cdots \\
        &\leq \sum_{k=1}^K I(\boldsymbol{h}^k; X)
    \end{aligned}
    \end{equation}
\end{proof}

\begin{theorem} (proposition 1)
    \begin{equation}
            \begin{aligned}
        I(H; X)
        \leq \sum_{k=1}^K \sum_{i=1}^N
        I(\boldsymbol{h}^k_i; X_i)
        = \sum_{k=1}^K\sum_{i=1}^N \mathbb{H}[X_i] (1 - p^k_{\theta, i})
    \end{aligned}
    \end{equation}
\end{theorem}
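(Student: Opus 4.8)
The statement is a chain of two inequalities and one equality. The preceding theorem already supplies the outer bound $I(H;X)\le\sum_{k=1}^K I(\boldsymbol{h}^k;X)$, so the plan is to take that as established and prove two remaining facts: (a) the per-head, per-node decomposition $I(\boldsymbol{h}^k;X)\le\sum_{i=1}^N I(\boldsymbol{h}^k_i;X_i)$ for each fixed $k$, and (b) the closed form $I(\boldsymbol{h}^k_i;X_i)=\mathbb{H}[X_i]\,(1-p^k_{\theta,i})$. Summing (a) over $k$ and then substituting (b) delivers the claim.

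For (a) I would reuse exactly the machinery built for the outer bound rather than reprove anything from scratch. The structural fact I would lean on is that the coordinates $\boldsymbol{h}^k_1,\dots,\boldsymbol{h}^k_N$ of a single head are produced by masks drawn independently across nodes and independently of the data; hence, conditioned on $X$, the $\boldsymbol{h}^k_i$ are mutually independent and $X$ sits as the common parent in a star-shaped model of the same form as Figure~\ref{fig:graphical}. This lets me apply Corollary~\ref{cor:ib} with $X$ in the role of the conditioning variable and the $\boldsymbol{h}^k_i$ in the roles of the children, then iterate over $i=1,\dots,N$ exactly as the outer theorem iterated over the $K$ heads, giving $I(\boldsymbol{h}^k;X)\le\sum_{i=1}^N I(\boldsymbol{h}^k_i;X)$. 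The last move is to collapse $I(\boldsymbol{h}^k_i;X)$ to $I(\boldsymbol{h}^k_i;X_i)$: since $\boldsymbol{h}^k_i=\boldsymbol{m}^k_i X_i$ reads off $X$ only through $X_i$, the chain rule gives $I(\boldsymbol{h}^k_i;X)=I(\boldsymbol{h}^k_i;X_i)+I(\boldsymbol{h}^k_i;X_{\setminus i}\mid X_i)$, and the conditional term vanishes because $\boldsymbol{h}^k_i\perp X_{\setminus i}\mid X_i$. Thus (a) reduces to a single subadditivity inequality furnished by Corollary~\ref{cor:ib}.

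For (b) I would evaluate the scalar mutual information directly as $I(\boldsymbol{h}^k_i;X_i)=\mathbb{H}[X_i]-\mathbb{H}[X_i\mid\boldsymbol{h}^k_i]$, reading the mask $\boldsymbol{m}^k_i$ as a Bernoulli variable that keeps node $i$ with probability $1-p^k_{\theta,i}$ and zeros it out otherwise. The conditional entropy splits into two cases on the observed value of $\boldsymbol{h}^k_i$: when the node is dropped (probability $p^k_{\theta,i}$) the observation is $\boldsymbol{h}^k_i=0$ and carries nothing about $X_i$, so $\mathbb{H}[X_i\mid\boldsymbol{h}^k_i=0]=\mathbb{H}[X_i]$; when the node is kept, $\boldsymbol{h}^k_i=X_i$ pins $X_i$ down exactly, so the conditional entropy is $0$. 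Averaging yields $\mathbb{H}[X_i\mid\boldsymbol{h}^k_i]=p^k_{\theta,i}\,\mathbb{H}[X_i]$, whence $I(\boldsymbol{h}^k_i;X_i)=(1-p^k_{\theta,i})\mathbb{H}[X_i]$ as stated.

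The main obstacle, and the place I would be most careful, is the case analysis in (b). It silently uses that the event $\{\boldsymbol{h}^k_i=0\}$ coincides with the drop event, which needs $\Pr[X_i=0]=0$ (true for the continuous connectivity features here) so that a kept node never masquerades as a dropped one; and it mixes a point mass at $0$ with a continuous law, so $\mathbb{H}$ must be understood through the conditional-entropy expression above rather than as a naive differential entropy of $\boldsymbol{h}^k_i$. I would also flag the bookkeeping of the mask convention, identifying the factor $(1-p^k_{\theta,i})$ with the keep-probability, since that is precisely where the stated form is pinned down. Finally, noting that the two inequalities are tight exactly when the nodes are independent (subadditivity) and the heads do not overlap (Corollary~\ref{cor:ib}) recovers the equality condition asserted after Equation~\ref{equ:multi_drop}.
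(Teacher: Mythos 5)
Your argument is correct, and it is more self-contained than the paper's. The paper proves only the first inequality $I(H;X)\le\sum_k I(\boldsymbol{h}^k;X)$ (its Theorem 2, via Lemma 1 and Corollary 1.1) and then disposes of the per-head step $I(\boldsymbol{h}^k;X)\le\sum_i I(\boldsymbol{h}^k_i;X_i)=\sum_i\mathbb{H}[X_i](1-p^k_{\theta,i})$ in one line by citing \cite{kim2021drop} with $\boldsymbol{h}^k, X$ playing the roles of their $Z, X$. You instead reconstruct that cited step: you observe that the coordinates of a single head form the same star-shaped graphical model with $X$ as the common parent, re-apply the paper's own Corollary 1.1 and iterate over $i$ exactly as Theorem 2 iterates over $k$, collapse $I(\boldsymbol{h}^k_i;X)$ to $I(\boldsymbol{h}^k_i;X_i)$ by the Markov property $\boldsymbol{h}^k_i\perp X_{\setminus i}\mid X_i$, and then compute the scalar mutual information by the two-case conditional-entropy argument. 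This buys a proof that does not lean on an external reference and makes explicit the hypotheses the citation hides ($\Pr[X_i=0]=0$ so the drop event is identifiable, the mixed discrete/continuous reading of $\mathbb{H}$, and the independence of masks across nodes). Your flag about the convention of $p^k_{\theta,i}$ is also well taken: your derivation requires $1-p^k_{\theta,i}$ to be the keep probability, which matches \cite{kim2021drop} and the stated formula, but sits uneasily with Equation 3 of the paper, where $\boldsymbol{m}^k_i=1$ (node kept) precisely when $p^k_{\theta,i}>0.5$; under that reading the penalty would be $\mathbb{H}[X_i]\,p^k_{\theta,i}$ instead. That tension is in the paper, not in your proof, and is worth recording.
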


\begin{proof}
Given \textbf{Theorem 2}. It suffices to prove
\begin{equation}
    I(\boldsymbol{h}^k; X) \leq \sum_{i=1}^N I(\boldsymbol{h}^k_i; X_i) = \sum_{i=1}^N \mathbb{H}[X_i] (1 - p^k_{\theta, i}),\quad \forall 1\leq k \leq K
\end{equation}
And this is exactly the conclusion in \cite{kim2021drop} if we consider $\boldsymbol{h}^k$ and $X$ as $X$ and $Z$ respectively in their paper.
\end{proof}

\section{Connectome-Based Predictive Modeling}
\label{sec:app_cpm}

\citet{shen2017using,finn2015functional} have shown tremendous promise in recent years in detecting imaging biomarkers by CPM (connectome-based predictive modeling) \citep{rosenberg2015predicting, dubois2018resting, rosenberg2020functional, rosenberg2016neuromarker}. Such a model, based chiefly on functional MRI data, can measure the significance of the input edge weights, which is revealed by a correlation coefficient that reflects the correlation between the edge weights and the neurological outcomes. One could expect a large correlation coefficient to indicate the high quality of edge weights. We utilize the CPM as an evaluation model to evaluate the quality of our learned hyperedges.
Here is a pipeline overview of the CPM process:

\begin{enumerate}[topsep=0pt,itemsep=-1ex,partopsep=1ex,parsep=1ex]
    \item \textbf{Connectivity Calculation}: For each subject, compute the Pearson correlation coefficients for each possible pair of brain regions. This is based on the fMRI series of those regions.
    \item \textbf{Edge Significance}: Calculate the correlation between each brain connectivity edge and the outcome of interest (e.g., cognition scores) across all subjects. The correlation of an edge indicates its significance.
    \item \textbf{Edge Selection}: Identify significant connectivity edges. These are the edges where the correlation values are greater than a predetermined significance threshold.
    \item \textbf{Weight Summation}: For each subject, sum the weights of the significant edges identified in the previous step to derive a single summary score (scalar).
    \item \textbf{Model Fitting}: Fit a linear model that predicts the neurological outcomes based on the summed weights, where each subject is a sample.
    \item \textbf{Model Evaluation}: Across all subjects, calculate the correlation of predicted values and the neurological outcomes. Note that this correlation coefficient is equivalent to the one between the summed weights and the outcomes, and is exactly the metric $r$ we use to evaluate our hyperedges in Equation \ref{equ:metric}.
\end{enumerate}
Since positive edges and negative edges will cancel out with each other when being summed, we adopt the combining strategy in \cite{boyle2023connectome}.

\paragraph{CPM Measures the Quality of Edge Weights}
According to step 6, the evaluation of the predictive model could be measured by the correlation between predicted and ground-truth outcomes \cite{shen2017using}. Since CPM is a linear model that predicts the outcome based on the sum of significant edge weights, the correlation is equal to the correlation between the sum and ground-truth outcomes (which is exactly the $r$ in Equation \ref{equ:metric}). Hence, one can expect a larger correlation if the edge weights are more correlated (and thus are more predictive).

\paragraph{Significance of Edges in CPM}
In step 2, CPM obtains a correlation coefficient $r^k$ for each edge weight $w^k$ and the cognition score $Y$ across all subjects. Consider a classical hypothesis test $H_0: r^k = 0, H_1: r^k \neq 0$. Assume $w^k$ and $Y$ are drawn from independent normal distribution (corresponds to $H_0$), the probability density function of correlation coefficient $r^k$ is
$$
f(r^k)=\frac{\left(1-{r^k}^2\right)^{n / 2-2}}{\mathrm{~B}\left(\frac{1}{2}, \frac{n}{2}-1\right)},
$$
where $n$ is the number of samples and $B$ is the beta function. Based on the distribution, we obtain a $P$-value for the $k$-th edge, which is used to measure the significance of the edge.

\section{Computational Complexity}
\label{app:complexity}
Suppose we have $N$ regions as $N$ nodes and $K$ hyperedges. For the \textsc{Constructor}, the unignorable computation is from the mask operation. The computational complexity of this step for each hyperedge is $O(Nd)$ since it does a pairwise multiplication operation of a matrix of size $N \times d$. Given that there are $K$ hyperedges, the total complexity is $O(NKd)$. For the \textsc{Weighter}, the computation is from the dim reduction operation and the linear head. The dim reduction operation is an MLP. In this work, the hidden dimensions are a fraction of the original feature dimension. Therefore, the complexity of the dim-reduction MLP is $O(d^2)$. The linear head only contributes $O(d)$, which is neglectable. 
As a result, the computational complexity of the whole model is $O(NKd + d^2) = O(N^2K)$ since the feature dimension is equal to the number of regions (See Appendix \ref{app:dataset} for features we used). This complexity is just at the same scale as that of MLPs even though we are addressing a more challenging task: identifying high-order relationships in an exponential space.

\section{Dataset details}
\label{app:dataset}
\subsection{Preprocessing of ABCD Dataset}
We use the preprocessed ABIDE dataset from the official website. We preprocessed the restricted ABCD dataset ourselves. Below is the preprocessing procedure of the ABCD dataset.

\paragraph{raw data to voxel-level fMRI time series}
The fMRI data is processed using BioImage Suite \citep{joshi2011unified}. First, we performed motion correction and slice-time correction using SPM5; and via BioImage Suite, the data were registered to a standardized $3mm \times 3mm \times 3mm$ common space, where we generated masks representing white matter, gray matter, and cerebrospinal fluid (CSF) and computed the mean time courses for both white matter and CSF. We orthogonalized each gray matter time course with respect to the mean time courses of both white matter and CSF, and we orthogonalized each gray matter time course to the six motion-related signals via SPM5. We then applied a bandpass Butterworth filter with a frequency range of $0.02$Hz to $0.1$Hz to the orthogonalized time courses. We used a Gaussian kernel with a full-width at half-maximum (FWHM) of $6$mm to enhance spatial coherence and spatial smoothing. Lastly, we removed the linear trend from all signals in accordance with the methodology detailed in \cite{shenGroupwiseWholebrainParcellation2013}. We deleted scans with more than 0.10 mm mean frame-to-frame displacement. Additional details about the standard preprocessing procedures, such as slice time and motion correction, and registration to the MNI template can be found in \cite{greene2018task} and \cite{horien2019individual}.
\paragraph{voxel-level fMRI time series to region features}
The fMRI time series data of a human subject is represented in four dimensions (3 spatial dimensions + 1 temporal dimension), which can be imagined as a temporal sequence of 3D images. First, brain images are parceled into regions (or nodes) using the AAL3v1 atlas \citep{rolls2020automated}. Following previous works \citep{kan2022brain,li2021braingnn,thomas2022self}, each region's time series is obtained by averaging all voxels in that region. Consistent with previous connectivity-based methods \citep{li2021braingnn,kan2022brain,ktena2018metric,said2023neurograph}, for each region, we use its Pearson correlation coefficients to all regions as its features. We randomly split the data into train, validation, and test sets in a stratified fashion. The split ratio is 8:1:1.

\subsection{Dataset Statistics}
The statistics of the number of instances and the time series length are summarized in Table \ref{tab:stat}.
\begin{table*}
    \centering
     \caption{Statistics of datasets we use. ABIDE dataset only contains resting-state fMRI data. ABCD contains resting-state and task-based data. We use ABCD of $2$ timepoints. For example, \textit{Rest 1} means resting-state fMRI data from timepoint $1$.}
    \label{tab:stat}
    \resizebox{\textwidth}{!}{
    \begin{tabular}{ccccccccccccc}
    \toprule
          \multirow{2}{*}{Dataset} & \multicolumn{3}{c}{ABIDE} & & \multicolumn{8}{c}{ABCD} \\
          \cmidrule(lr){2-4} \cmidrule(lr){6-13}
           & FIQ & VIQ & PIQ & &Rest 1&  SST 1&  EN-back 1&  MID 1& Rest 2&  SST 2&  EN-back 2& MID 2\\
          \midrule
          \#(instances) & 1035  & 1035  & 1035&  & 1676 &  1673&  1678&  1678& 1949&  1053&  1044& 1062\\
          length of time series & 196 & 196 & 196& & 375 & 437 & 362 & 403 & 375 & 437 & 362 & 403 \\
          \bottomrule
    \end{tabular}
    }
\end{table*}

\section{Training Details}
\label{app:training}
Due to the data scarcity, we found training on single sub-datasets of the ABCD dataset leads to severe overfitting. To mitigate this, we train our model as well as all the baselines on all datasets together and report the results individually. Note that on the ABIDE dataset, we train our model under the three targets separately since we don't encounter such an issue on the ABIDE dataset.

\paragraph{Hardware}
We train our model on a machine with an Intel Xeon Gold 6326 CPU and RTX A5000 GPUs.

\paragraph{Software}
See Table \ref{tab:software} for the software we used and the versions.
\begin{table}[H]
    \centering
    \caption{Software versions.}
    \label{tab:software}
    \begin{tabular}{cc}
    \toprule
    software & version\\
    \midrule
    python & 3.8.13\\
       pytorch  & 1.11.0 \\
       cudatoolkit & 11.3 \\
       numpy  & 1.23.3 \\
       ai2-tango & 1.2.0\\
       nibabel & 4.0.2\\
       \bottomrule
    \end{tabular}
\end{table}

\paragraph{Hyperparameter Choices}
The hyperparameters selection is shown in Table \ref{tab:hyper}. Some crucial hyperparameters ablation experiments can be found in Appendix \ref{app:ablation}.

\begin{table}
    \centering
    \caption{Hyperparameter choices.}
    \label{tab:hyper}
    \begin{tabularx}{\columnwidth{}}{cXc}
    \toprule
         notation&  meaning& value\\
         \midrule
         $lr$&  learning rate& $1\times 10^{-3}$\\
         $K$&  number of hyperedges& $32$\\
         $\beta$& trade-off coefficients information bottleneck& $0.2$\\
         $[h_1, h_2, h_3]$&  hidden sizes of the dim reduction MLP& $[32,8,1]$\\
         $B$&  batch size& $64$\\
         \bottomrule
    \end{tabularx}
\end{table}

\section{Why is \textsc{HyBRiD} not the Best on the Rest 1 Dataset?}
\label{app:why_fail}
At present, we do not fully understand the reasons for the phenomena observed. The hypotheses outlined below are based on our preliminary observations and analyses.

\paragraph{Hypothesis 1: Sparsity in Resting-State Connectivity}
We propose that resting-state connectivity is sparser and more diffuse compared to task-based connectivity. This assertion is refined from our initial claim that the connectivity relevant to the phenotypic outcomes during the resting state is notably sparser than during the task-based state.

Upon examining traditional pairwise connections, we summarized the number of significant connections most related to the phenotypic outcome in Table~\ref{tab:num_sig_fail}. Our analysis indicates that Rest 1 exhibits the fewest phenotype-related connections, suggesting that these connections are indeed sparse. We did not evaluate high-order relationships due to the lack of ground truth data for such connections. Further investigation is required to validate this hypothesis. 

\begin{table}[H]
    \centering
    \begin{tabular}{c|cccccccc}
    \toprule
	& SST $1$ 	&EN-back $1$ 	&MID $1$ 	&Rest $1$ 	&SST $2$ 	&EN-back $2$ 	&MID $2$ 	&Rest $2$ \\
    \midrule
\# (connections) 	&$3164$ 	&$3414$ 	&$2989$ 	&$2036$ 	&$3112$ 	&$3014$ 	&$2692$ 	&$3896$\\
\bottomrule
    \end{tabular}
    \caption{The number of significant pairwise edges selected by CPM in different tasks and timepoints.}
    \label{tab:num_sig_fail}
\end{table}

\paragraph{Hypothesis 2: Comparative Quality of Data Across Timepoints}
We hypothesize that the data quality at timepoint 1 is inferior to that at timepoint 2. The performance metrics, as shown in the Table \ref{tab:train_test}, are better at timepoint 2 than at timepoint 1. Moreover, timepoint 1 exhibits more severe overfitting compared to timepoint 2. This discrepancy may be attributed to the higher relevance of the information (pertaining to the outcome) and lower noise levels at timepoint 2. This hypothesis also requires further verification through detailed investigations and experiments.
\begin{table}[H]
    \centering
    \begin{tabular}{c|cccccccc}
    \toprule
	&SST $1$ 	&EN-back $1$ 	&MID $1$ 	&Rest $1$ 	&SST $2$ 	&EN-back $2$ 	&MID $2$ 	&Rest $2$ \\
 \midrule
    train 	&$0.988$ 	&$0.984$ 	&$0.984$ 	&$0.984$ 	&$0.982$ 	&$0.977$ 	&$0.978$ 	&$0.976$ \\
test 	&$0.361$ 	&$0.348$ 	&$0.386$ 	&$0.223$ 	&$0.738$ 	&$0.714$ 	&$0.816$ 	&$0.730$\\
\bottomrule
    \end{tabular}
    \caption{Performances on training and test dataset in different tasks and timepoints.}
    \label{tab:train_test}
\end{table}

\paragraph{Conclusion}
Brain activities during resting states are not driven by external tasks, leading to more diffuse and less predictable patterns of activation. The low data quality of ABCD timepoint 1 (compared to timepoint 2) even intensifies this issue. Therefore, high-order relations might not be a good inductive bias on the Rest 1 dataset since the connections might be much more sparse and involve fewer nodes.




\section{Evaluation on Synthetic Dataset}
\label{app:syn}
\paragraph{Dataset Synthesis}

\begin{table*}[htbp]
    \centering
    \caption{Precision, recall and F1 score on synthetic dataset under different $K$, where $K$ is the number of hyperedges. The performance of the Random method increases as $K$ increases because the Hungarian algorithm is likely to provide better matches when there are more candidates.}
    \label{tab:syn}
    \resizebox{\linewidth}{!}{
    \begin{tabular}{cccccccccccccccc}
    \toprule
         \multirow{2}{*}{Metric} & \multicolumn{3}{c}{$K=1$} & & \multicolumn{3}{c}{$K=5$} && \multicolumn{3}{c}{$K=10$} && \multicolumn{3}{c}{$K=30$} \\
         \cmidrule(lr){2-4}
         \cmidrule(lr){6-8}
         \cmidrule(lr){10-12}
         \cmidrule(lr){14-16}
        
         & P & R & F1 && P & R & F1&& P & R & F1& & P & R & F1  \\
         \midrule
         Random & $0.158$ & $0.094$ & $0.110$ && $0.182$ & $0.170$ & $0.171$ && $0.224$ & $0.218$ & $0.218$& & $\underline{0.288}$ & $0.250$ & $0.266$ \\
        $k$NN  & $0.533$ & $\mathbf{1.000}$ & $0.696$ && $0.247$  & $0.241$ & $0.243$ & &$0.144$ & $0.156$ & $0.148$ && $0.139$ & $0.128$ & $0.133$ \\
        $l_1$ hypergraph & $0.160$ & $\mathbf{1.000}$ & $0.276$  && $0.180$ & $0.016$ & $0.028$ && $0.135$ & $0.016$ & $0.028$ && $0.256$ & $0.026$ & $0.047$ \\
        $l_2$ hypergraph &  $\underline{0.987}$ & $0.825$  & $\underline{0.897}$ &&  $\underline{0.412}$ & $\underline{0.631}$ & $\underline{0.494}$ && $\underline{0.276}$ & $\underline{0.499}$ & $\underline{0.351}$ && $0.233$ & $\underline{0.470}$ & $0.310$ \\
        HGCN & $0.625$ & $\mathbf{1.000}$ & $0.769$  && $0.196$ & $0.477$ & $0.278$ && $0.188$ & $0.379$  & $0.251$ && $0.277$ & $0.364$ & $\underline{0.314}$  \\
        \shline
         \mname{} (Ours) & $\mathbf{1.000}$ & $\mathbf{1.000}$ & $\mathbf{1.000}$ && $\mathbf{0.595}$  & $\mathbf{0.680}$ & $\mathbf{0.631}$ && $\mathbf{0.504}$&$\mathbf{0.604}$ & $\mathbf{0.549}$ && $\mathbf{0.428}$ & $\mathbf{0.393}$ & $\mathbf{0.401}$ \\
         \bottomrule    
    \end{tabular}
    }
\end{table*}

Our synthetic dataset is constructed as follows:

1) \textit{Structure Generation}: For each hyperedge, we randomly sample the hyperedge degree $d$ from a discrete uniform distribution between $2$ and $d_{\text{max}}$. After that, we randomly sample $d$ nodes as members of this hyperedge. 

2) \textit{Feature generation}: For each hyperedge, we randomly sample a scalar value $v$ from the uniform distribution $U(0,1)$. We then randomly sample the features from $U(0, 2v)$ for each node in this hyperedge.

3) \textit{Label Generation}: For each hyperedge, we calculate the maximum of its node features as the summary of the hyperedge. We sum the summaries of all hyperedges to get a single value $Y$ as the label of the hypergraph. 

Note that consistent with our settings of real-world datasets, the structure is shared across all hypergraphs, while node features and hyperedge weights are different.

\paragraph{Metrics} We use the macro-precision, recall and F1 score to measure the correctness of the learned hyperedges with respect to the ground-truth ones. Note that when learning on the synthetic dataset, the order of hyperedge may differ from the ground truths. Therefore, the Hungarian algorithm is employed to match the learned hyperedges with the ground truth. The precision, recall and F1 score are calculated after matching.

\paragraph{Baselines}
We use the hyperedge construction methods in Section \ref{sec:experiments}, i.e. $k$NN\cite{huang2009video}, $l_1$ hypergraph\cite{wang2015visual} and $l_2$ hypergraph\cite{jin2019robust} as our baselines. 
Besides, we implement a hypergraph structure learning model. This is a hypergraph convolutional neural network (HGCN) \citep{bai2021hypergraph}, where both the parameters of layers and the hypergraph structure (the incidence matrix) are learnable.

\paragraph{Implementaton Details}
The maximum degree $d_{\text{max}}$ is set to $34$, which is the maximum degree of hyperedges of the ABCD dataset according to the analysis in Section \ref{sec:experiments}. The number of nodes is set to $164$, which is the number of regions of AAL3v1 atlas, used in the ABCD experiment. We conduct the experiments under different numbers of hyperedges (i.e. $K = 1,5,10,30$).

\paragraph{Results}
We report the results under different numbers of hyperedges. Although this is a hard task, our model consistently outperforms all the baselines significantly, with an average improvement of 28.3\% in terms of the F1 score. This demonstrates that our model can learn the hyperedges well under the MIMR objective, without the direct supervision of hyperedge ground truths. This also inspires us to use automated model evaluation techniques \cite{wang2023toward, peng2023came, peng2024energy} on real-world data in the future where labelled high-order relationships are not readily accessible.

\section{Model Fit Performance}
We report the goodness of fit of our model and the state-of-the-art baseline in Table~\ref{tab:fit_abide}, with the Mean Square Error (MSE) as the metric.
\label{app:fit}
\begin{table}[H]
    \centering
    \begin{tabular}{c|ccc}
    \toprule
        Model & FIQ & VIQ & PIQ \\
    \midrule
        BrainNetTF & $5.917$  &	$5.429$ & $3.355$ \\
\mname{} (ours) & $\mathbf{3.477}$ &	$\mathbf{4.331}$ &	$\mathbf{2.806}$ \\
\bottomrule
    \end{tabular}
    \caption{Performance of the model and baselines (MSE) in fitting the target on the ABIDE dataset. }
    \label{tab:fit_abide}
\end{table}

\begin{table}[H]
    \centering
    \begin{tabular}{c|cccccccc}
    \toprule
Model &	SST 1 &	EN-back 1 &	MID 1 &	Rest 1 &	SST 2 &	EN-back 2 &	MID 2 &	Rest 2 \\
\midrule

BrainNetTF & 1.153 & 1.373 & $\mathbf{1.147}$ & $\mathbf{1.244}$ & 0.616 & 0.677 & 0.720 & 0.750\\
\mname{} (ours) &  $\mathbf{1.031}$& 	 $\mathbf{1.348}$&  $1.275$& 	 $1.470$& 	 $\mathbf{0.604}$& 	 $\mathbf{0.641}$&  $\mathbf{0.543}$& 	 $\mathbf{0.464}$ \\
\bottomrule
    \end{tabular}
    \caption{Performance of the model and baselines (MSE) in fitting the target on the ABCD dataset.}
    \label{tab:fit_abcd}
\end{table}

\section{More Ablation Studies}
\label{app:ablation}

\subsection{Choices of Number of Hyperedges $K$}
\label{app:k_choice}
As explained in Section \ref{sec:approach}, we use $K$ heads for $K$ hyperedges. We study the correlation between the $r$ value and the number of hyperedges on three datasets:
\begin{figure}[H]
    \begin{subfigure}[t]{0.48\linewidth}
            \includegraphics[width=\textwidth]{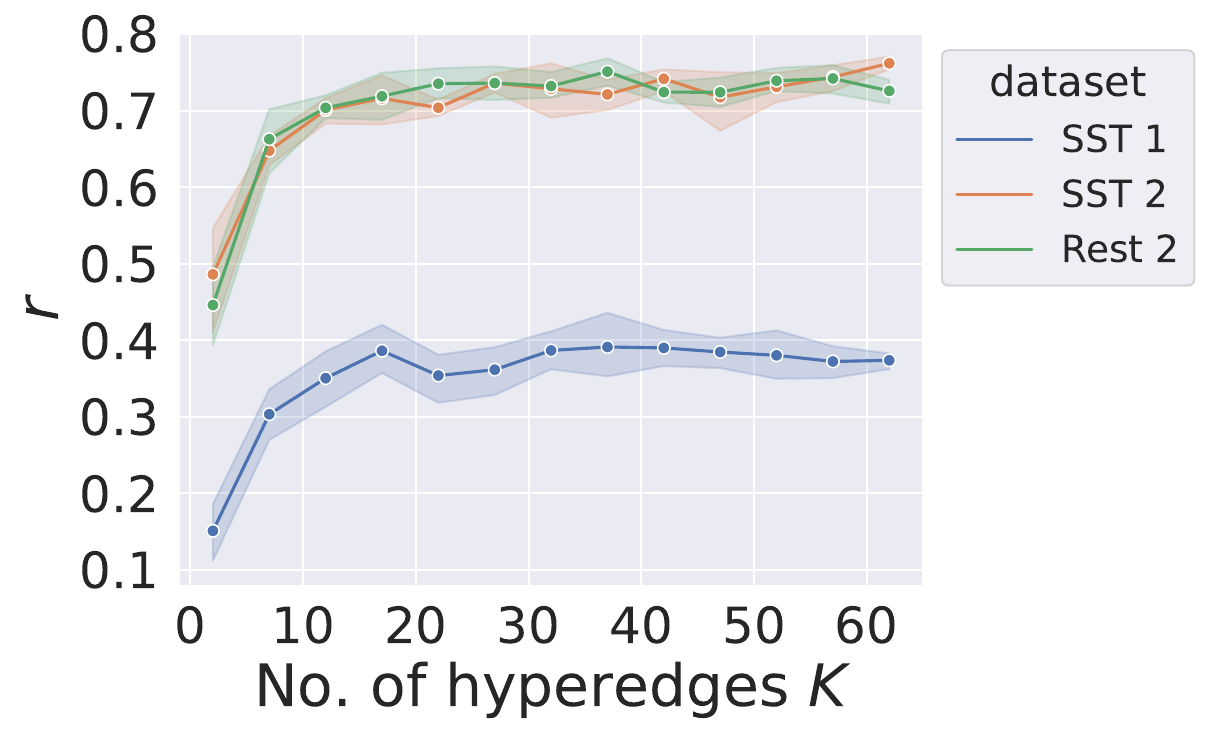}
    \caption{The performance with increasing the number of hyperedges through $2$, $7$, $12$, $17$, $22$, $27$, $32$, $37$, $52$, $57$, $62$ on three datasets.}
    \label{fig:abl_hyper}
    \end{subfigure}
    \hfill
    \begin{subfigure}[t]{0.48\linewidth}
    \includegraphics[width=\textwidth]{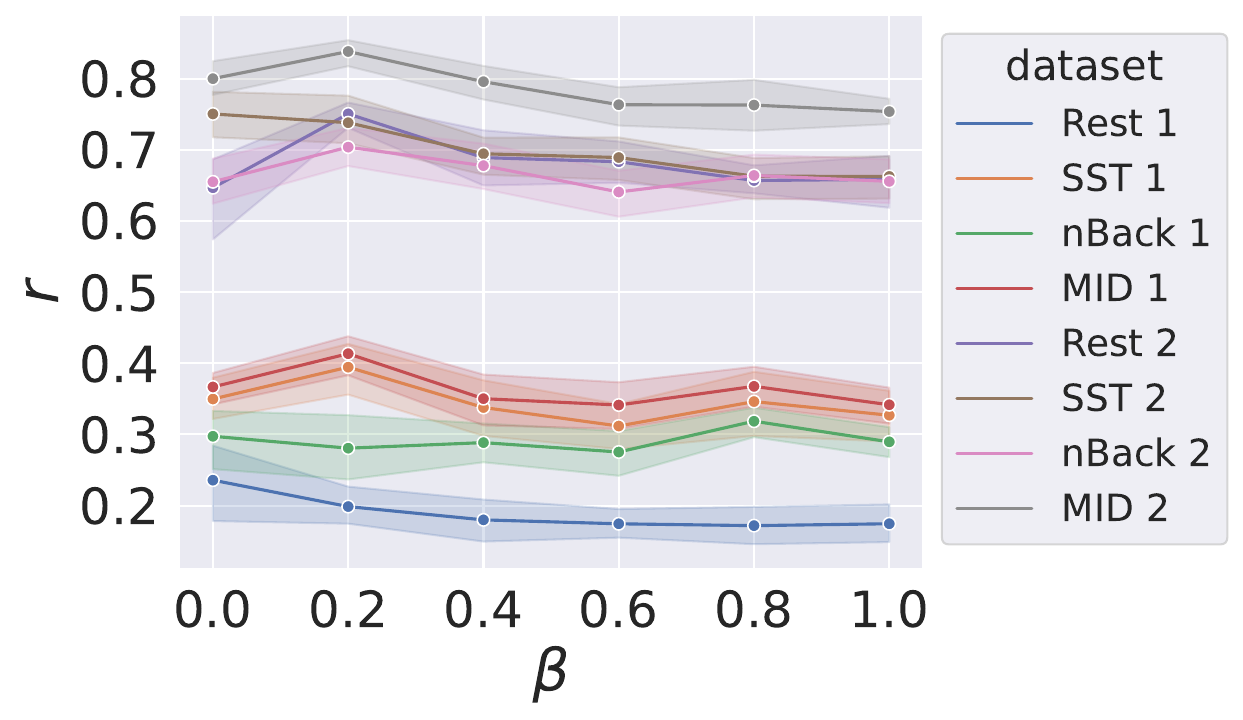}
    \caption{Performance with increasing $\beta$ from $0.1$ to $1.0$ with step $0.1$ on all datasets.}
    \label{fig:abl_beta}
    \end{subfigure}
    \caption{Studies on the choice of two key hyperparameters, $\beta$ and $K$, in our model.}

\end{figure}

From Figure \ref{fig:abl_hyper}, we find that the overall performance increases dramatically before $K=17$, but becomes stable and close to saturation after $K=32$. To improve the efficiency while ensuring the performance, we choose $K=32$.

\subsection{Choices of the Trade-off Coefficient $\beta$}
\label{app:beta}

In our optimization objective \ref{equ:ib2}, $\beta$ acts as a trade-off parameter, which is a non-negative scalar that determines the weight given to the second term relative to the first. To study its fluence to the performance, we plot the model performances on all datasets under different $\beta$ in Figure \ref{fig:abl_beta}. We can see performances on $3$ datasets (Rest 1, SST 2) consistently decrease when $\beta$ increases. However, on other $5$ datasets (SST 1, MID 1, Rest 2, nBack 2, MID 2), we can observe a peak at $\beta=0.2$. Accordingly, we adopt $\beta=0.2$.



\section{Runtime Comparison}
\label{app:runtime}
Figure \ref{fig:runtime} summarizes the per-batch training time of all deep learning models. We find that \mname{} is the most efficient one, with $87\%$ faster than the second one (BrainNetTF) and at least $1255\%$ faster than the GNN-based ones.

\begin{figure}[H]
    \centering
    \includegraphics[width=0.5\linewidth]{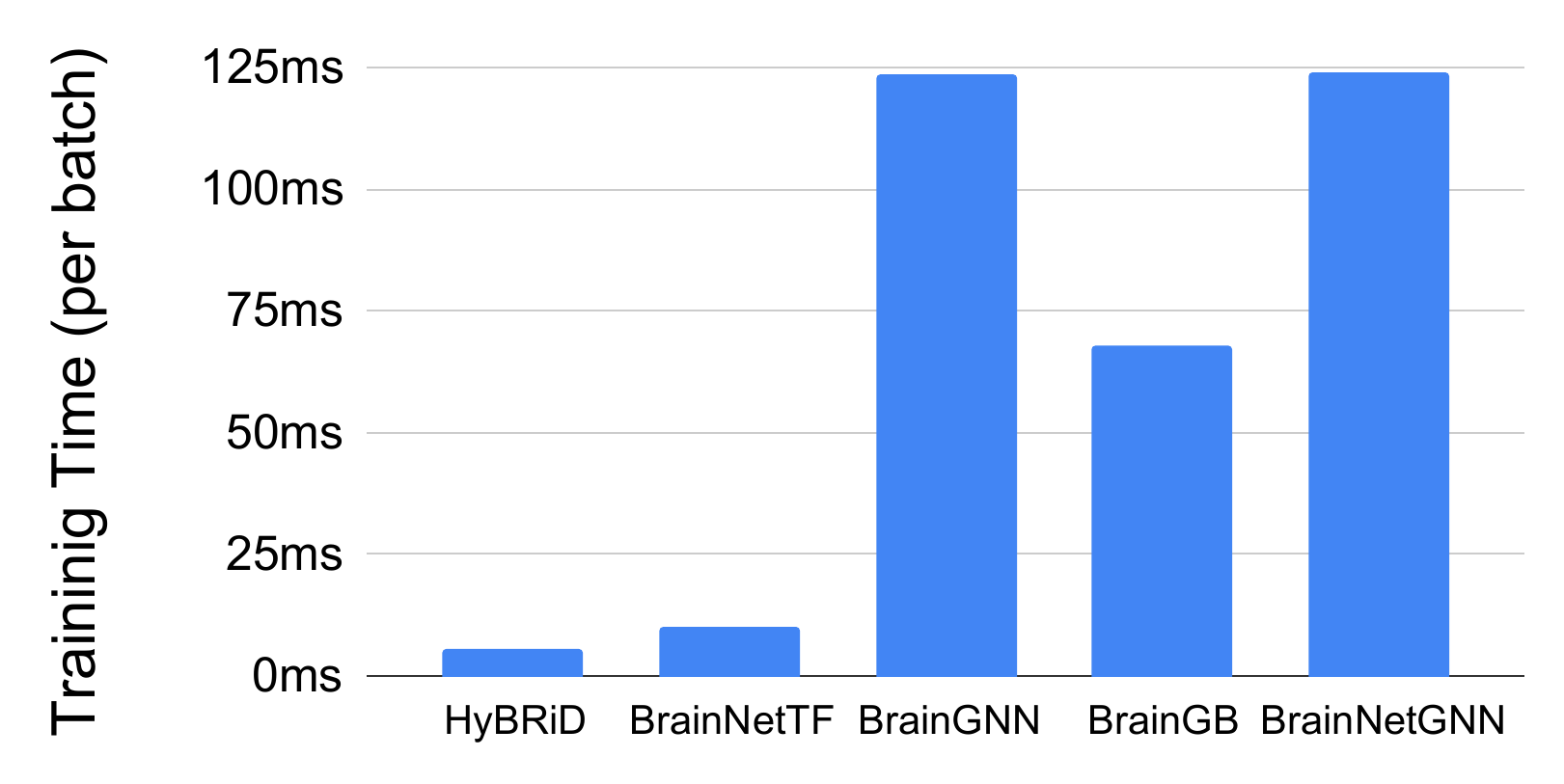}
    \caption{Training time per batch of all deep learning models.}
    \label{fig:runtime}
\end{figure}

\section{More Visualizations}
\label{app:viz}
\paragraph{Task-Based Brain Region Importance}
\begin{figure}
    \centering
    \includegraphics[width=0.49\textwidth]{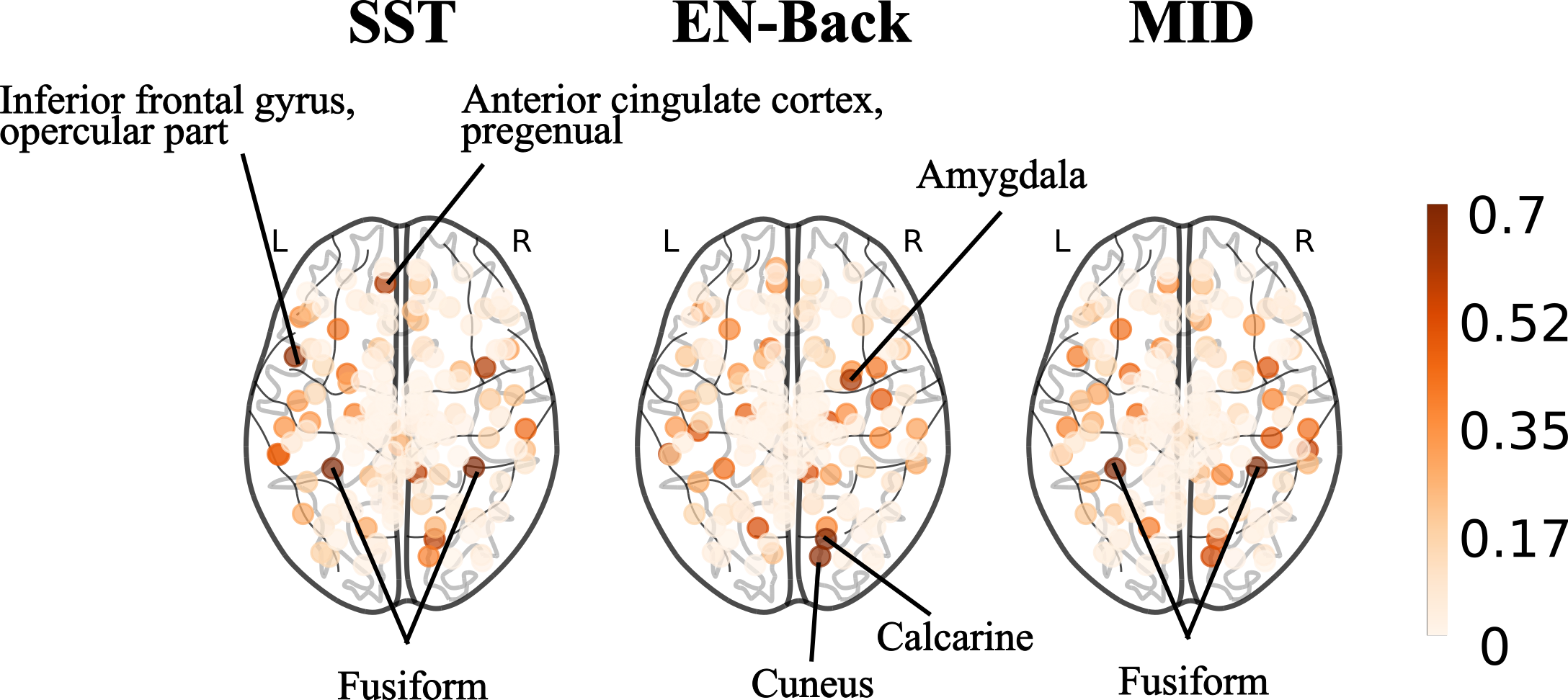}
    \vskip 0.1in
    \caption{Visualization of the frequency of each region under different fMRI tasks.}
    \label{fig:brainplot}
    \vskip -0.1in
\end{figure}
To better understand the roles of each brain region in cognition under different fMRI tasks, we study the frequency at which each region appears in a hyperedge out of all identified hyperedges. The frequency, which can be considered as a measure of region importance, is visualized in Figure \ref{fig:brainplot}.
Visual regions (\textit{Fusiform}, \textit{Cuneus}, \textit{Calcarine}) are especially active due to the intensive visual demands in all three tasks.
We found that the \textit{Inferior frontal gyrus, opercular part} and the \textit{Anterior cingulate cortex, pregenual}, recognized for their participation in response inhibition \citep{pornpattananangkul2016cultural}, frequently appear in the SST task. This aligns with what the SST task was designed to test. Interestingly, of the three tasks (SST, EN-back, MID), only EN-back prominently involves the \textit{Amygdala}, a key region for emotion processing. This makes sense as EN-back is the only task related to emotion.

\paragraph{Resting-State Brain Region Importance}
We visualize the region importance of the resting state in Figure \ref{fig:rest_region}. Different from task states, where specific brain regions are activated in response to particular tasks, brain activities during resting states, are not driven by external tasks, leading to more diffuse and less predictable patterns of activation. This makes it harder to pinpoint specific interactions or functions.
\begin{figure}[H]
    \centering
    \includegraphics[width=0.4\linewidth]{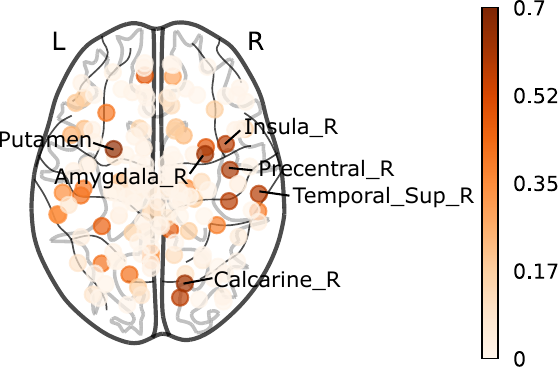}
    \caption{Region importance of resting state.}
    \label{fig:rest_region}
\end{figure}


\section{Discussion of the Possible Methods to Interpret Hyperedges}
\label{app:interpret}
As mentioned in the main text, high-order relationships are much harder to interpret than pairwise ones given the exponential complexity. Here we propose a potential hierarchical strategy that tries to interpret them.

\begin{itemize}
    \item  \textbf{Adaptive Clustering Algorithm with Adjustable Granularity}: Suppose we have a clustering algorithm, where the granularity (or number of clusters) can be controlled. The clustering algorithm can cluster brain regions based on their functions (e.g. motor, visual, …) to different function modules \cite{shen2010graph}.
    \item   \textbf{Initial Analysis at Lower Granularity}: Set the granularity at a low level (i.e. small number of clusters). In this case, the high-order relationships are much easier to interpret (since the degree is low). However, some high-order relationships will connect to the same set of clusters, thus implying the interactions between the same set of function modules. For example, high-order relations $h_1$ and $h_2$ both connect to visual, motor, and emotion.
    \item \textbf{Refinement through Increased Granularity}: Find the high-order relationships that connect to the same set of function modules, and increase the granularity level, so we can tell the difference between them. For example, $h1$ may connect to the left part of the visual module, while $h_2$ connects to the right part.
\end{itemize}

\end{document}